\def\nb0{{\mathbf{0}}}
\def\nb1{{\mathbf{1}}}
\def\ncalX{{\mathcal{X}}}
\def\pois{{\rm Pois}}
\newtheorem{lemma}{Lemma}
\newtheorem{ndef}{Definition}
\newtheorem{theorem}{Theorem}
\newtheorem{prop}{Proposition}
\newtheorem{cor}{Corollary}
\newtheorem{example}{Example}
\newtheorem{remark}{Remark}
\def\E{\mathbb{E}}
\def\P{\mathbb{P}}
\def\pc{\mathtt{P_c}}
\def\sinr{\mathtt{SINR}}			
\def\snr{\mathtt{SNR}}
\def\calN{\mathcal{N}}
\def\calA{\mathcal{A}}
\begin{document}
\graphicspath{{./Figures/}}
\title{Power-Efficient System Design for Cellular-Based Machine-to-Machine Communications}
\author{Harpreet S. Dhillon, Howard C. Huang, Harish Viswanathan and Reinaldo A. Valenzuela
\thanks{
H. S. Dhillon is with WNCG, the University of Texas at Austin, USA. Email: dhillon@utexas.edu. H. C. Huang, H. Viswanathan and R. A. Valenzuela are with Bell Labs, Alcatel-Lucent, NJ. A part of this paper was presented at IEEE Globecom Workshops 2012~\cite{DhiHuaC2012}. \hfill
Manuscript last revised: \today.}
}

\maketitle
\begin{abstract}
The growing popularity of Machine-to-Machine (M2M) communications in cellular networks is driving the need to optimize networks based on the characteristics of M2M, which are significantly different from the requirements that current networks are designed to meet. First, M2M requires large number of short sessions as opposed to small number of long lived sessions required by the human generated traffic. Second, M2M constitutes a number of battery operated devices that are static in locations such as basements and tunnels, and need to transmit at elevated powers compared to the traditional devices. Third, replacing or recharging batteries of such devices may not be feasible. All these differences highlight the importance of a systematic framework to study the power and energy optimal system design in the regime of interest for M2M, which is the main focus of this paper. For a variety of coordinated and uncoordinated transmission strategies, we derive results for the optimal transmit power, energy per bit, and the maximum load supported by the base station, leading to the following design guidelines: (i) frequency division multiple access (FDMA), including equal bandwidth allocation, is sum-power optimal in the asymptotically low spectral efficiency regime, (ii) while FDMA is the best practical strategy overall, uncoordinated code division multiple access (CDMA) is almost as good when the base station is lightly loaded, (iii) the value of optimization within FDMA is in general not significant in the regime of interest for M2M.

\end{abstract}

\section{Introduction}
The widespread coverage of cellular networks makes them an attractive option for handling the growing number of sensing and monitoring devices. Therefore, M2M communications, involving wide area communication of sensor data to an Internet based application, is emerging as an important service over mobile cellular networks~\cite{FocM2003,VodM2010,EriM2011,LieCheJ2011}. It spans multiple vertical industries such as transportation, healthcare, utilities, retail, industrial monitoring, banking, and home automation and includes a variety of applications within each vertical. Projections for growth of M2M communication devices range from 24 billion \cite{GSMAM2012} to 50 billion \cite{EriM2011a} in the next decade with over 2 billion M2M devices expected to directly attach to the cellular network by this time. Given the potential for a significant new revenue stream from M2M data services, the industry is focusing on ensuring that cellular networks can efficiently serve the needs of M2M communications.  

\subsection{Motivation and Related Work}
M2M devices in some verticals are deployed in locations that are not frequented by people. For example, vending machines and  water, gas or other meters, are typically deployed in basements of buildings, water monitoring systems are deployed underground, and some traffic monitoring systems may be deployed in the tunnels. Such machines will typically have devices that communicate with their controlling applications or servers over wireless networks. For such locations, radio signals have to be substantially stronger compared to what is required for traditional service. Hence, device transmit power required for communication becomes a critical issue. In addition, if the devices are battery operated total transmission energy from the device is also an important consideration.

For M2M, the goal of minimizing mobile transmit power is aided by the nature of M2M traffic. In contrast to traditional consumer traffic, M2M typically involves a large number of short payload transactions, as shown in \cite{VisMunJ2012,ShaJiC2012}, where traffic models for various vertical applications are characterized. For example, a fleet management application can involve transmission of location every 20 seconds by each vehicle to the central application server with each transaction involving less than 500 bytes \cite{JouAttC2011}. Similarly, reporting of health data such as blood pressure or heart rate by medical devices involves payloads less than 200 bytes \cite{IEEEM2008}.



Mobile cellular networks, including the fourth generation Long Term Evolution (LTE)~\cite{GhoZhaB2010} are neither designed with link budget requirements of M2M devices, nor optimized for M2M traffic pattern.
The system design is optimized to maximize spectral efficiency and minimize latency. Thus mobiles transmit in short duration bursts at high power levels that maximize the total sector throughput. Mobile transmit power levels are primarily dictated by maximum transmit power limit of mobiles and out-of-cell interference considerations. Battery life is not a primary concern for communications since the dominant power consumption on human devices are driven by displays and complex application processing. Furthermore, users will recharge their devices as required.

Clearly, M2M communications impose new requirements on cellular networks that demand rethinking some of the design principles~\cite{3GPPM2010,ZheHuJ2012}. First, for short transactions it may be advantageous to transmit payload in the random access request itself instead of establishing dedicated bearers~\cite{CheWanC2010}. This depends on the size of the payload, the overheads involved and the level of base station loading. Second, the optimization criterion for resource allocation and transmit power levels is average transmit power or energy consumed to transmit a given payload.  This is because for M2M devices battery life is an important concern, and communications consumes a significant fraction of the battery energy, especially if the devices have to transmit at elevated powers due to their adverse locations. Motivated by these differences, several modifications in the current communication protocols to reduce signaling overhead~\cite{MarKirC2005,CheYanC2009,CheWanC2010} and power consumption~\cite{ChaCheC2011} have been proposed in the literature. Furthermore, the idea of cooperative design where several devices are clustered together with a possibility of a controller acting as a common link between a cellular base station and the devices is investigated, e.g., in~\cite{BarHerC2011,TuHoC2011}. The problem of uplink scheduling of M2M devices in LTE networks is studied in \cite{LioAleC2011}, where it is shown that it is better not to form different classes of the devices in order to increase the maximum load that can be served at the base station. Despite these research efforts, there is little understanding of the fundamental tradeoffs in the parameter space of interest in M2M communications, especially from a power and energy optimal design perspective, which is the main focus of this paper.

\subsection{Contributions}

\noindent {\bf Evaluate different multiple access strategies and identify the minimum power strategy.} We study both uncoordinated and coordinated multiple access strategies in this paper. In uncoordinated strategies, the payload is carried in the very first message together with the control information such as the device identity and thus there are no dedicated resources allocated. We consider both FDMA and CDMA random access strategies. The transmit power, number of frequency channels in FDMA and spread factor in CDMA are adjusted based on the average load on the system, which is known at the devices by listening to downlink broadcast signaling from the base station. Interested readers can refer to~\cite{VisJ2009} for a comparison of the two schemes from a throughput perspective. We then consider coordinated strategies in which the resources are explicitly scheduled to active devices. The base station determines the transmission time duration, bandwidth and power and indicates this to the devices. Here we consider successive interference cancellation (SIC), FDMA and time division multiple access (TDMA) strategies and focus on determining the optimal bandwidth/time and transmit power settings for minimizing the total power and/or energy consumed to transmit a fixed size payload. Comparison of the uncoordinated and coordinated strategies shows, not surprisingly, that the coordinated strategies outperform the uncoordinated strategies for heavy loading whereas the performance is comparable for light loads. Hence for low loads, an uncoordinated strategy may be preferred when taking into account downlink overhead required to inform the devices of the allocated resources.

\noindent {\bf Bound the gap between optimal and equal-resource allocation in coordinated orthogonal transmission.} Minimum energy and/or power scheduling over time or frequency is known to be a convex optimization problem and various efficient algorithms have been proposed in the literature~\cite{PraBiyC2001,ElNaiC2002,KesKodC2003}. We show that, in the cellular setting, it is possible to approximate the optimal resource allocation through simpler one-shot solutions, which leads to closed form expressions for the optimization parameters in certain special cases. Furthermore, we prove that in the parameter space of interest, somewhat surprisingly, simple equal-resource allocations perform within a very small factor of the optimal allocations both in terms of transmit power and energy per bit for both TDMA and FDMA. The bounding factor is derived in closed form for all the cases and is around $1.25$ for both FDMA sum-power and sum-energy and also for TDMA sum-energy. It is around $2$ for TDMA sum-power. This suggests that a simple equal resource allocation algorithm is sufficient to achieve near-optimal transmission in terms of both the power and energy minimization.

\noindent {\bf Asymptotically optimal strategies for coordinated transmission.} Using the closed form bounds derived to bound the gap between the optimal and equal resource allocation, we further show that the FDMA equal resource allocation is both sum-power and sum-energy optimal over the space of FDMA strategies in the limit of asymptotically low spectral efficiency. Since the closed form bound is the same for TDMA energy optimal solution, the result extends in this case as well. Using these bounds and the closed form result derived for the sum-power required in SIC, we also prove that FDMA, including equal resource allocation strategy, is sum-power optimal over general resource allocation strategies (not necessarily orthogonal) in the limit of asymptotically low spectral efficiency.

The rest of the paper is organized as follows. We introduce notation and describe the system model in Section II. We discuss the uncoordinated and coordinated access strategies and the associated system designs in Sections III and IV, respectively. We present the numerical results comparing different strategies in Section V. We present our conclusions in Section VI.

\section{System Model \label{sec:sysmod}}

\begin{figure}[t]
\centering
\includegraphics[width=.6\columnwidth]{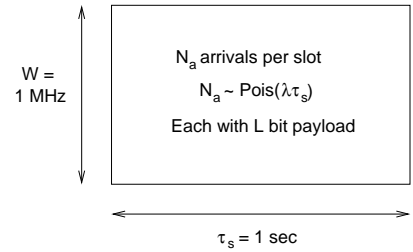}
\caption{Illustration of the time-frequency resource ``slice'' over which multiple users are scheduled.}
\label{fig:ResourceBlock}
\end{figure}

\subsection{System Setup}
In this paper, we consider a single cell consisting of an access point lying at the origin and devices uniformly distributed around it in an annular region with inner and outer radii $r_{i}$ and $r_{0}$, respectively. The non-zero inner radius is assumed to avoid singularity in the path loss model, which is discussed later in this section. We focus on a single cell system with no out-of-cell interference, which is one of the simulation scenarios in 3GPP model~\cite{3GPPM2010}. It should be noted that the out-of-cell interference effectively changes the operating signal-to-interference-plus-noise ratio ($\sinr$), which can be incorporated in the current analysis to study the multi-cell case. In this study, we focus only on the uplink. To characterize the uplink load seen by the base station, we model the arrival process of packets as a Poisson point process with mean $\lambda$ arrivals per second.  
For concreteness, we assume a time slotted system with the slot duration denoted by $\tau_s$. The analysis will be performed on a typical time-frequency resource slice with slot duration $\tau_s$ and bandwidth $W$ as shown in Fig.~\ref{fig:ResourceBlock}. 
We denote the number of packet transmission requests in each block by $N_a \sim \pois(\lambda \tau_s)$. Each packet is assumed to have a payload of $L$ bits.

\subsection{Multiple Access}
Uplink multiple access is primarily enabled by a broadcast or beacon signal that is transmitted by the base station at the start of each time slot. The beacon signal is assumed to carry load information, which is characterized by the average load $\lambda$ seen at the base station. We consider two broad categories of multiple access strategies: i) uncoordinated: the devices transmit data using slotted random access and there is no need to establish dedicated radio bearers, and ii) coordinated: the devices transmit data in a separate scheduled transmission. An example of the system using uncoordinated strategy is the one-stage setup illustrated in Fig.~\ref{fig:1stage}, where both the control information and the data are transmitted as a part of the random access signal. As discussed in the sequel, the transmit power and other transmission parameters are independently adjusted by each mobile device based on the load information, $\lambda$, that is received in the beacon signal. 

\begin{figure}[ht]
\centering
\includegraphics[width=.8\columnwidth]{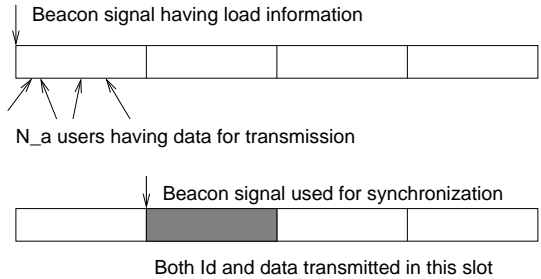}
\caption{One-stage design where both device Id and data are transmitted together in random access stage.}
\label{fig:1stage}
\end{figure}

\begin{figure}[ht]
\centering
\includegraphics[width=.8\columnwidth]{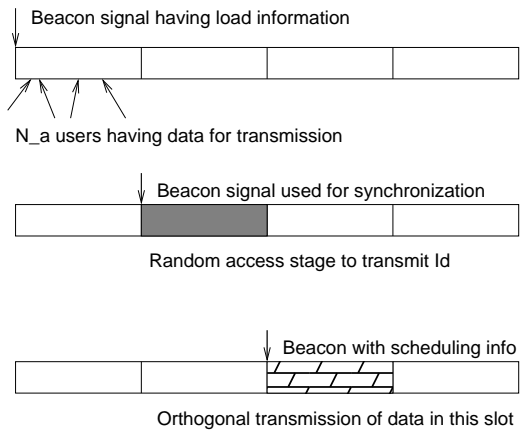}
\caption{Two-stage design where device Id is transmitted in the random access stage to establish connection followed by the scheduled data access stage.}
\label{fig:2stage}
\end{figure}

An example of the system using a coordinated strategy is the the two-stage setup shown in Fig.~\ref{fig:2stage}. In this setup, a dedicated uplink connection is first established by transmitting only the control information in the random access stage. The random access design is the same as the one discussed for the one-stage system. The base station then transmits scheduling information as a part of the beacon signal, which is used by the devices to transmit orthogonally from each other. 

In all the strategies considered in this paper, we assume that all the packet transmissions occur in the current slot and none of them are left for scheduling in a future slot, thereby introducing a notion of a packet deadline explicitly.
Furthermore, we only consider successful packet transmission in a single attempt since the packet failure rate will be small by design. The effective deadline for a single slot design is $\tau_s$ and for two-stage design is $2\tau_s$, counting time from the beginning of the first slot after the packet arrives.

\subsection{Modeling Uplink Channel from Device to Base Station}
The received power at the base station from a device located at distance $r$, assuming transmit power $P_t$, path loss exponent $\gamma$, small-scale fading gain $h$, large scale shadowing gain $\ncalX$ and direction based antenna gain $G$ is:
\begin{align}
P_r = P_t \ncalX h G r^{-\gamma}.
\end{align}
Instead of treating these link budget parameters individually, we model their composite effect by defining the reference signal-to-noise ratio ($\snr$) $\mu$ as the average received $\snr$ from a device transmitting at maximum power $P_{\max}$ over bandwidth $W$ located at cell edge, i.e., at distance $r_0$ from the base station. Therefore, the received $\snr$ $\mu_r$ at the base station from a device located at distance $r$ can be expressed in terms of $\mu$ as:
\begin{align}
\mu_r = \frac{P_t}{P_{\max}} \mu \ncalX h \left(\frac{r}{r_0}\right)^{-\gamma}.
\end{align}
Please note that reference $\snr$ is a function of signal bandwidth because of the scaling of noise power with bandwidth. We will comment more on this when we study coordinated access using TDMA and FDMA in the sequel. Now defining the channel gain $g = \ncalX h \left(\frac{r}{r_0}\right)^{-\gamma}$, we get the following simple expression for the received $\snr$ $\mu_r$:
\begin{align}
\mu_r = \frac{P_t}{P_{\max}} \mu g.
\label{eq:SNR}
\end{align}

In case the information symbols are transmitted over a bandwidth $W_\calN < W$, the reference SNR $\mu_\calN$ in this case can be written in terms of $\mu$ as:
\begin{equation}
\mu_\calN = \mu \frac{W}{W_\calN},
\end{equation}
which is greater than $\mu$ because of the decrease in the effective noise power. For this modified system, the received $\snr$ can be evaluated from \eqref{eq:SNR} by replacing $\mu$ with $\mu_\calN$. This will be helpful in analyzing the multiple access strategies that involve partitioning of frequency resources, e.g., FDMA. Throughout this paper, we assume that the channel gain $g$ is known at the device both in the cases of uncoordinated and coordinated strategies.

For all numerical results, we consider $P_{\max} = 1$ W, $r_0 = 1000$ m, $\gamma = 3$, $\mu=-3$ dB, $W=1$ MHz, $\tau_s=1$ sec and ignore fading and shadowing. A careful reader would observe that the value of $\mu$ is very low for a reference distance of $1000$ m. This is chosen to account for the high penetration losses suffered by the radio signals when the devices are deployed at adverse locations such as basements and tunnels. Since we have chosen $P_{\max} = 1$ W, we will drop it from \eqref{eq:SNR} to reduce it to $\mu_r = P_t \mu g$ with the understanding that the $P_t$ is the normalized by $P_{\max} = 1$ W.
 
\subsection{Preliminaries}
In the slotted system introduced above, an information symbol of payload $L$ bits will be transmitted over bandwidth $W_\calN \leq W$ for time $\tau \leq \tau_s$ depending upon the resources allocated to that packet. To fix the key ideas, we first confine our discussion to the perfectly orthogonal resource allocation using TDMA and FDMA, where one of these inequalities will be a strict inequality due to the partitioning of time-frequency resources.
Assuming capacity achieving codes, $\tau$ and $W_\calN$ are related to the received $\snr$ $\mu_r$ by Shannon's capacity equation as follows:
\begin{equation}
\frac{L} {\tau W_\calN} = \log_2 \left( 1 + \mu_r \right).
\label{eq:Shannon}
\end{equation}
It should be noted that the effect of finite block length can be easily incorporated in the above expression by means of an $\snr$ gap. Interested readers can refer to~\cite{SleJ1963} for more details. Using \eqref{eq:SNR} and \eqref{eq:Shannon}, we can find the minimum transmission time required to transmit $L$ bits over bandwidth $W_\calN$ under the maximum transmit power $P_{\rm max}$ constraint as:
\begin{equation}
\tau \ge \frac{L}{W_\calN \log_2 (1+P_{\rm max} \mu_\calN g)} = \tau_{\rm min}.
\label{eq:tau_min}
\end{equation}
Similarly, the minimum transmission bandwidth required for the transmission of $L$ bits over time $\tau$ is given by the solution of the following equation:
\begin{equation}
\frac{L} {\tau W_{\rm min}} = \log_2 \left( 1 + P_{\rm max} \mu \left( \frac{W}{W_{\rm min}} \right) g  \right).
\label{eq:W_min}
\end{equation}
As will be evident later, \eqref{eq:tau_min} and \eqref{eq:W_min} will be useful in formulating the optimization problems for TDMA and FDMA cases, respectively. These arguments easily extend to the CDMA case and are discussed as a part of the uncoordinated strategies in the next section.

\section{Uncoordinated Transmission}

In uncoordinated transmission, we assume that $L$ bits are transmitted in each transaction with the understanding that $L$ would be small ($\approx 50$ bits) when only control information, such as device identity, is transmitted and would be relatively large ($\approx 1000$ bits) when both control information and data are transmitted together. For device multiplexing, we consider two strategies: i) CDMA random access, and ii) FDMA random access. We start this discussion with the design of CDMA random access, from which the FDMA random access design will follow.

\subsection{CDMA Random Access}
In CDMA random access, we assume that each device selects a $N_c$ length code randomly from the set of $2^{N_c} - 1$ possible binary sequences, where $N_c$ is the design parameter. As a result of this random code selection, the chosen codes will not be perfectly orthogonal in general. Since the transmissions of various devices are synchronized, the cross correlation in two randomly chosen codes is assumed to be $1/N_c$. For fair comparison across all the transmission strategies considered in this paper, we assume that the total bandwidth is $W$ over which the CDMA waveform is transmitted, leading to $W/N_c$ as the effective bandwidth of the information symbols. We further assume that the devices perform uplink power control such that the target $\sinr$ at the base station is $\mu_t$.

\subsubsection{System Design}
As is clear from the setup, the only information devices have about the number of transmission requests $N_a \sim \pois(\lambda \tau_s)$ is the mean load $\lambda \tau_s$. Therefore, we design the system for $(1-\epsilon)^{th}$ percentile of the arrivals, which is denoted by $\bar{N}$, where $\P[N_a > \bar{N}] \leq \epsilon$. This also defines first failure event, i.e., when the actual number of arrivals are greater than $\bar{N}$ and it is no longer possible to achieve the target $\sinr$ while satisfying the maximum power constraint. We will use this later in our discussion to bound the overall failure probability.

After determining $\bar{N}$ from the load information, the next step is to determine the length $N_c$ of the spreading codes. For this, we first define the collision event as follows.
\begin{ndef}[Collision Event $\calA_c$]
Collision is said to occur when more than one device choose the same spreading code.
\end{ndef}
Clearly, collision leads to packet failures since there is no way to differentiate between various devices. For given $\bar{N}$ and $N_c$, the probability of collision is given by the following Lemma.
\begin{lemma}
\label{lem:Pc}
Defining $m (= 2^{N_c}-1)$ as the number of codes and $\bar{N}$ as the number of arrivals, the probability of collision can be expressed as
\begin{equation}
\pc = \P[\mathcal{A}_c] = 1 - \left(1-\frac{1}{m}\right)^{\bar{N}-1}.
\end{equation}
\end{lemma}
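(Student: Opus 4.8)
The plan is to compute the collision probability from the perspective of a single ``tagged'' device, which is the quantity that is actually needed later for bounding the per-packet failure probability. First I would fix an arbitrary tagged device among the $\bar N$ requesting devices and condition on the particular codeword $c$ that it draws from the set of $m = 2^{N_c}-1$ binary sequences. By the symmetry of the uniform random selection, every conditional probability computed below is independent of the realized value of $c$, so fixing it is without loss of generality.

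Next I would exploit the independence of the code selections across devices. Each of the other $\bar N - 1$ devices draws its codeword uniformly and independently from the same $m$ sequences, so the probability that any one given such device draws a codeword different from $c$ is $1 - 1/m$. Multiplying over the $\bar N - 1$ remaining devices yields $\P[\text{no other device picks } c] = \left(1 - \frac{1}{m}\right)^{\bar N - 1}$. The collision event $\mathcal{A}_c$, as experienced by the tagged device, is exactly the complement of this event (i.e., at least one of the other $\bar N - 1$ devices shares the tagged device's codeword), hence $\pc = \P[\mathcal{A}_c] = 1 - \left(1-\frac{1}{m}\right)^{\bar N-1}$, which is the claimed identity.

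I would then append a brief remark making the modelling conventions explicit, since the computation itself is elementary and the only real subtlety is stating exactly what is being counted: the codeword choices are i.i.d.\ uniform over the $m$ sequences, the number of active devices is conditioned on the design value $\bar N$ (the $(1-\epsilon)$-percentile of $N_a$), and $\pc$ is the probability of a collision \emph{involving a typical device} rather than the probability that some collision occurs somewhere in the slot; the latter would instead be the birthday-problem expression $1 - \prod_{k=1}^{\bar N - 1}\left(1 - \frac{k}{m}\right)$. The main obstacle, such as it is, is thus purely expository — pinning down these conventions — after which the two short steps above close the proof.
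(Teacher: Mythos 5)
Your proof is correct, and it takes a slightly different (and in fact cleaner) route than the paper. You fix a tagged device, use the independence and uniformity of the other $\bar N-1$ selections to get $\P[\text{no other device picks the same code}]=\left(1-\tfrac1m\right)^{\bar N-1}$, and take the complement; this yields the stated formula directly and makes explicit that $\pc$ is the collision probability seen by a typical device, not the birthday-problem probability that some collision occurs anywhere in the slot. The paper instead looks at the occupancy of a particular code across all $\bar N$ devices, writes $\P[n]=\binom{\bar N}{n}\left(\tfrac1m\right)^n\left(1-\tfrac1m\right)^{\bar N-n}$, and sets $\pc=1-\P[0]-\P[1]$; as written, the subtracted second term appears as $\tfrac1m\left(1-\tfrac1m\right)^{\bar N-1}$ without the $\binom{\bar N}{1}=\bar N$ factor, and only with that (tagged-device) reading does the algebraic identity $\left(1-\tfrac1m\right)^{\bar N}+\tfrac1m\left(1-\tfrac1m\right)^{\bar N-1}=\left(1-\tfrac1m\right)^{\bar N-1}$ recover the claimed expression. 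So your derivation buys a transparent justification of exactly the quantity the lemma states (and that the later failure-probability bound needs), whereas the paper's occupancy-based decomposition, taken literally with the full binomial $\P[1]$, would compute the smaller ``a given code is chosen by at least two devices'' probability; your closing remark pinning down the modelling convention is precisely the clarification that reconciles the two.
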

\begin{proof}
First note that the probability of a particular code being chosen by $n$ devices is
\begin{align}
\P[n] = {\bar{N} \choose n} \left( \frac{1}{m} \right)^n  \left(1 - \frac{1}{m}  \right)^{\bar{N}-n}.
\end{align}
The result follows from the fact that
\begin{align}
\pc &= 1 - \P[0] - \P[1] = \P[\mathcal{A}_c]\\
 &= 1 - \left(1-\frac{1}{m}\right)^{\bar{N}} - \frac{1}{m} \left(1 - \frac{1}{m}  \right)^{\bar{N}-1}.
\end{align}
\end{proof}

\begin{cor}
The length of the spreading sequence required to restrict the collision probability within a predefined value $\pc$ is
\begin{align}
N_c &= \left\lceil \log_2 \left(1+ \frac{1}{1 - (1-\pc)^{\frac{1}{\bar{N} - 1}}}  \right)  \right\rceil \\
&\approx \left\lceil \log_2 \left(1 + \frac{\bar{N}-1}{\pc}\right)\right\rceil.
\end{align}
\end{cor}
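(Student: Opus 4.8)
The plan is to invert the collision-probability formula of Lemma~\ref{lem:Pc} and then round up to the nearest integer. Starting from $\pc = 1 - (1 - 1/m)^{\bar N - 1}$ with $m = 2^{N_c} - 1$, I would impose the design requirement that the true collision probability not exceed the target $\pc$. Since the right-hand side is decreasing in $m$ (hence in $N_c$), this requirement is equivalent to a lower bound on $m$: rearranging gives $(1 - 1/m)^{\bar N - 1} \ge 1 - \pc$; taking $(\bar N - 1)$-th roots of both nonnegative sides (a monotone operation) gives $1 - 1/m \ge (1-\pc)^{1/(\bar N - 1)}$; and therefore $m \ge \bigl(1 - (1-\pc)^{1/(\bar N - 1)}\bigr)^{-1}$.

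Next I would substitute $m = 2^{N_c} - 1$, so the condition reads $2^{N_c} \ge 1 + \bigl(1 - (1-\pc)^{1/(\bar N - 1)}\bigr)^{-1}$; taking base-$2$ logarithms yields the stated lower bound on $N_c$. Because $N_c$ must be a positive integer and $2^{N_c}$ is increasing in $N_c$, the smallest feasible value is the ceiling of that logarithm, which is precisely the first displayed expression in the corollary.

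For the approximation I would use a first-order expansion of the $(\bar N - 1)$-th root: writing $(1-\pc)^{1/(\bar N - 1)} = \exp\!\bigl(\tfrac{1}{\bar N - 1}\ln(1-\pc)\bigr)$ and using $\ln(1-\pc) \approx -\pc$ together with $e^{-x} \approx 1 - x$ — valid in the regime of interest where $\pc \ll 1$ and $\bar N$ is moderately large — I get $1 - (1-\pc)^{1/(\bar N - 1)} \approx \pc/(\bar N - 1)$, whose reciprocal is $\approx (\bar N - 1)/\pc$. Substituting this into the exact expression recovers the second display.

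Honestly there is no serious obstacle here: the result is a direct algebraic inversion of Lemma~\ref{lem:Pc} followed by an integer-rounding argument. The only points that need a little care are the monotonicity observation that licenses replacing ``collision probability $\le \pc$'' by a clean lower bound on $N_c$, and tracking the direction of each inequality through the root and the logarithm; the higher-order terms dropped in the approximation are negligible in the M2M parameter regime.
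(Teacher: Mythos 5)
Your proposal is correct and follows essentially the same route as the paper: invert the formula of Lemma~\ref{lem:Pc} with $m = 2^{N_c}-1$ and round up, then obtain the approximation from the first-order expansion $(1-\pc)^{1/(\bar N - 1)} \approx 1 - \pc/(\bar N - 1)$, which is exactly the paper's Taylor-series step (your exponential form is the same computation). The only difference is cosmetic: you spell out the monotonicity and inequality directions more explicitly, while the paper additionally sketches an alternate derivation of the approximation via the bound $\pc \leq (\bar N - 1)/m$, which you do not need.
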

\begin{proof}
The main result follows directly form Lemma~\ref{lem:Pc} and the approximation follows in the low collision probability regime from the fact that
\begin{align}
\lim_{\pc \rightarrow 0}(1-\pc)^{\frac{1}{\bar{N} - 1}} &\stackrel{(a)}{=} 1 + \frac{\ln(1-\pc)} {\bar{N}-1}\\
&\stackrel{(b)}{=} 1 - \frac{\pc}{\bar{N}-1},
\end{align}
where $(a)$ follows form the Taylor series expansion of $a^{x}$ and $(b)$ follows from the fact that $\lim_{x\rightarrow 0} \ln (1 - x) = -x$. An alternate way to directly prove the approximate result is by observing that the collision probability $\pc$ can be tightly upper bounded by
\begin{equation}
\pc \leq \frac{\bar{N}-1}{m}.
\end{equation}
To prove this bound, define $M$ as the random variable denoting the number of unique codes occupied by $\bar{N}-1$ devices. The collision probability can now be expressed as
\begin{equation}
\pc = \frac{E[M]}{m} \leq \frac{\bar{N} - 1}{m},
\end{equation}
which completes the proof.
\end{proof}

For a given $N_c$ and channel gain $g$, the transmit power required to achieve target $\sinr$ $\mu_t$ can be derived in the closed form and is given by the following Lemma. This will be useful when we compare and contrast various access techniques in Section~\ref{sec:numresults}.

\begin{lemma}[CDMA Transmit Power] \label{lem:CDMA_Pt}
The transmit power and the energy per bit required by a mobile to achieve target $\sinr$ $\mu_t = 2^\frac{L N_c}{W \tau_s} - 1$ with channel gain $g$ and code length $N_c$ are
\begin{align}
P_t &=  \left[ \mu g\left(N_c \mu_t^{-1} - (\bar{N}-1) \right) \right]^{-1} \\
E_b &= \frac{\tau_s}{L} \left[ \mu g\left(N_c \mu_t^{-1} - (\bar{N}-1) \right) \right]^{-1}.
\end{align}
\end{lemma}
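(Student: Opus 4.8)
The plan is to first pin down the target-$\sinr$ formula and then invert a single power-control equation. For the target $\sinr$, recall that in CDMA random access the information symbols occupy the effective bandwidth $W/N_c$ while the spread waveform fills the full band $W$, and the transaction lasts one slot $\tau_s$. Substituting $W_\calN = W/N_c$ and $\tau = \tau_s$ into Shannon's relation \eqref{eq:Shannon} gives $\frac{L N_c}{W\tau_s} = \log_2(1+\mu_t)$, which is exactly the stated $\mu_t = 2^{L N_c/(W\tau_s)} - 1$. This part is immediate and only records the bandwidth bookkeeping.

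The substantive step is to express the base-station $\sinr$ of a typical device in terms of its transmit power $P_t$. Because every active device runs uplink power control to the common target, all $\bar{N}$ received $\snr$s over the band $W$ equal $\mu_r = P_t\mu g$ by \eqref{eq:SNR}. Despreading then applies processing gain $N_c$ to the desired stream, while each of the remaining $\bar{N}-1$ transmissions leaks through with the assumed code cross-correlation $1/N_c$; tracking the chip-to-symbol and noise normalizations shows that the despread interference-to-noise contributed per interferer is again $\mu_r$, so that
$$\mu_t \;=\; \frac{N_c\,\mu_r}{1 + (\bar{N}-1)\mu_r}.$$
I would write this accounting out carefully, since it is the one place where the CDMA modeling assumptions (equal received powers, cross-correlation $1/N_c$, matched-filter processing gain $N_c$) enter; everything after it is algebra, and this is the main obstacle.

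Finally, I would solve the relation for $P_t$. Clearing the denominator gives $\mu_t = \mu_r\bigl(N_c - (\bar{N}-1)\mu_t\bigr)$, hence $\mu_r = \mu_t\big/\bigl(N_c - (\bar{N}-1)\mu_t\bigr)$, and using $\mu_r = P_t\mu g$ yields
$$P_t \;=\; \frac{1}{\mu g\bigl(N_c\mu_t^{-1} - (\bar{N}-1)\bigr)},$$
the claimed expression. The energy-per-bit formula follows from $E_b = P_t\tau_s/L$, since the device transmits at power $P_t$ for the whole slot $\tau_s$ to deliver $L$ bits. I would also note the implicit feasibility constraint $N_c\mu_t^{-1} > \bar{N}-1$ required for $P_t$ to be positive and finite, which is precisely the ``first failure event'' flagged before the lemma.
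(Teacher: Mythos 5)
Your proposal is correct and matches the paper's own proof essentially step for step: Shannon's relation with $W_\calN = W/N_c$ and $\tau=\tau_s$ gives $\mu_t$, the power-controlled $\sinr$ balance $\mu_t = N_c\mu_r/\bigl(1+(\bar N-1)\mu_r\bigr)$ is just the paper's $\mu_t = N_c/\bigl((\bar N-1)+\mu_f^{-1}\bigr)$ rewritten, and inverting it with $\mu_r = P_t\mu g$ yields the stated $P_t$, with $E_b = P_t\tau_s/L$. Your added remark on the feasibility condition $N_c\mu_t^{-1} > \bar N - 1$ is a sensible observation the paper handles separately via the outage/failure discussion, but it does not change the argument.
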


\begin{IEEEproof}
From Shannon's capacity equation, we have
\begin{equation}
\frac{L N_c}{W \tau_s} = \log_2(1+\mu_t) \Rightarrow \mu_t = 2^\frac{L N_c}{W \tau_s} - 1
\end{equation}
Assuming cross correlation in spreading codes to be $1/N_c$ and the effective symbol bandwidth to be $W/N_c$, the target $\sinr$ $\mu_t$ can be expressed as
\begin{equation}
\mu_t = \frac{N_c}{(\bar{N}-1) + \mu_f^{-1}} \Rightarrow \mu_f = \left[N_c \mu_t^{-1} - (\bar{N}-1)  \right]^{-1},
\end{equation}
where $\mu_f$ is the fixed received $\snr$ at the base station. The required transmit power can now be expressed as
\begin{equation}
\mu_f = P_t\mu g \Rightarrow P_t =  \left[ \mu g\left(N_c \mu_t^{-1} - (\bar{N}-1) \right) \right]^{-1},
\end{equation}
which completes the proof.
\end{IEEEproof}


\subsubsection{Failure Probability}
There are three failure events possible: i) $N_a > \bar{N}$, in which case we have $\mu_r < \mu_t$ and hence all the packets are in error, ii) collision event $\calA_c$ in which two users pick the same random sequence, and iii) $P_t$ derived in Lemma~\ref{lem:CDMA_Pt} is more than $P_{\rm max}$ for a particular device so that the maximum power constraint is violated (device is in outage). Denote this outage probability by $\delta$. As a matter of convention, we assume that the devices in outage transmit at their maximum power. This is justified because it is highly likely (with probability $1-\epsilon$) that the actual number of arrivals are much less than $\bar{N}$ for which the system is designed. This may ensure successful transmission in certain cases even though the transmit power was less than the designed value. In this case, the outage probability $\delta$ defined as $\P[P_t > P_{\rm max}]$ is an upper bound on the actual outage. For the given failure probabilities $\epsilon$, $\pc$ and $\delta$, the overall failure probability $P_f$ is upper bounded by the following Proposition.
\begin{prop} \label{prop:CDMA_Pf}
The overall failure probability in CDMA random access case is
\begin{equation}
P_f \leq \epsilon + (1 - \epsilon) \{\delta + \pc (1-\delta) \},
\end{equation}
where the inequality reduces to equality when $\delta=0$.
\end{prop}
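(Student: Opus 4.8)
The plan is to isolate the three failure mechanisms listed just above the statement and apply the law of total probability twice. For a tagged packet, let $E_1 = \{N_a > \bar N\}$ be the overload event, let $E_2 = \{P_t > P_{\max}\}$ be the per‑device outage event (so $\P[E_2] = \delta$ by definition of $\delta$, noting that a packet in $E_2$ may still be decoded since such devices transmit at $P_{\max}$), let $E_3 = \calA_c$ be the collision event for that packet, and let $F$ be the event that the packet is not decoded. I would first peel off $E_1$, then $E_2$. The two spots that need care are the \emph{direction} of the monotonicity step (so that replacing $\P[E_1]$ by the upper bound $\epsilon$ is legitimate) and the \emph{conditional independence} of the collision and outage events, which is exactly what sharpens the crude union bound $\delta+\pc$ into $\delta+\pc(1-\delta)$.

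\textbf{Step 1: condition on overload.} On $E_1$ the received $\sinr$ of every device falls below $\mu_t$, so every packet is in error and $\P[F\mid E_1] = 1$. Hence
\begin{equation}
P_f = \P[F] = \P[E_1] + \P[F\mid E_1^c]\bigl(1-\P[E_1]\bigr).
\end{equation}
Writing $q = \P[E_1] \le \epsilon$ and $p = \P[F\mid E_1^c] \in [0,1]$, the right‑hand side equals $q + p(1-q)$, whose partial derivative in $q$ is $1-p \ge 0$; it is therefore nondecreasing in $q$ on $[0,1]$, and substituting the bound $q\le\epsilon$ gives $P_f \le \epsilon + p(1-\epsilon)$. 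It remains to show $p \le \delta + \pc(1-\delta)$.

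\textbf{Step 2: condition on outage inside $E_1^c$.} Conditioned on $E_1^c$, a tagged packet can be lost only if its device is in outage or its code collides with another device's code: if neither occurs the target $\sinr$ is met and, since a single attempt is assumed to succeed by design, the packet is decoded. Thus $F\cap E_1^c \subseteq (E_2\cup E_3)\cap E_1^c$, so
\begin{equation}
p \le \P[E_2\cup E_3\mid E_1^c] = \P[E_2\mid E_1^c] + \P[E_3\cap E_2^c\mid E_1^c].
\end{equation}
The collision event is a function of the random code assignments only, while the outage event is a function of the device's channel gain (its location) only, so $E_2$ and $E_3$ are independent given $E_1^c$; using $\P[E_2\mid E_1^c] = \delta$ and $\P[E_3\mid E_1^c] \le \pc$ (the design collision probability of Lemma~\ref{lem:Pc}, attained in the worst case $N_a=\bar N$) yields $\P[E_3\cap E_2^c\mid E_1^c] \le \pc(1-\delta)$ and hence $p \le \delta + \pc(1-\delta)$. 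Combining with Step 1 gives the claim. Finally, when $\delta = 0$ there are no outage events and the inclusion $F\cap E_1^c \subseteq (E_2\cup E_3)\cap E_1^c$ collapses to the identity $F\cap E_1^c = E_3\cap E_1^c$; the only inequality left is $\P[E_1]\le\epsilon$, so under the design choice of $\bar N$ the relation holds with equality, as asserted.
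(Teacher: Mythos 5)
Your proof is correct and follows essentially the same route as the paper: a two-level application of the law of total probability, first peeling off the overload event $\{N_a>\bar N\}$ and then the outage event $\{P_t>P_{\max}\}$, with collision as the residual failure mechanism in the remaining branch. The only difference is cosmetic rigor — you make explicit the monotonicity argument that licenses replacing $\P[N_a>\bar N]$ by $\epsilon$ and the conditional independence of collision and outage, both of which the paper's proof leaves implicit.
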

\begin{proof}
For a general failure event $\calA$, the failure probability $P_f = \P[\calA]$ can be expressed as:
\begin{align}
P_f =&\  \P[\mathcal{A} | N_a > \bar{N}] \P[N_a > \bar{N}] + \P[\mathcal{A} | N_a \leq \bar{N}] \P[N_a \leq \bar{N}] \\
=&\ \epsilon + (1-\epsilon)\{\P[\mathcal{A} | N_a \leq \bar{N}, P_t > P_{\rm max}] \P[P_t > P_{\rm max}] + \nonumber \\
&\ \P[\mathcal{A} | N_a \leq \bar{N}, P_t \leq P_{\rm max}] \P[P_t \leq P_{\rm max}] \},
\end{align}
from which the result follows by observing that the failure events corresponding to the first and second probability terms are, respectively, device outage and collision of the chosen random sequences.
\end{proof}
This result shows that given the overall failure probability, the error probabilities $\epsilon$, $\pc$ and $\delta$ can not be independently chosen. Observe that the above upper bound is tight for small values of $\delta$, which leads to the following characterization of the maximum load supported by a base station.
\begin{cor}[Maximum Load]
The maximum arrival rate (load) supported by the base station for given failure probability $p$ and given outage probability $\delta$ is
\begin{equation}
\lambda_{\rm max} = \max \left\{\lambda: \P \left( P_t (\lambda) > P_{\rm max}  \right) \leq \delta, P_f \leq p \right\}.
\end{equation}
\end{cor}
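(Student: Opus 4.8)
The corollary is, in effect, the operational definition of the supportable load, so proving it means checking that the displayed set is well posed — that it contains small loads, is bounded above under the stated tolerances, and is down‑closed, so that its supremum is a genuine feasibility threshold — and that the two listed constraints are exactly the ones imposed by the preceding analysis. The plan is as follows.

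First I would spell out the deterministic pipeline that a load $\lambda$ triggers. Given $\lambda$, the device sets $\bar{N}=\bar{N}(\lambda)$ to be the smallest integer with $\P[N_a>\bar{N}]\le\epsilon$ for $N_a\sim\pois(\lambda\tau_s)$; it then sets $N_c=N_c(\bar{N})$ via the corollary to Lemma~\ref{lem:Pc} so that the collision probability meets its target $\pc$; and then $P_t(\lambda)$ is the random variable of Lemma~\ref{lem:CDMA_Pt}, random through the channel gain $g$ (location, fading, shadowing). Feasibility of $\lambda$ then means precisely $\delta(\lambda):=\P[P_t(\lambda)>P_{\rm max}]\le\delta$ together with $P_f(\lambda)\le p$, with $P_f(\lambda)$ controlled through Proposition~\ref{prop:CDMA_Pf}; this is exactly the set in the statement.

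The crux is monotonicity in $\lambda$. Using the stochastic ordering $\pois(\lambda_1\tau_s)\preceq_{\rm st}\pois(\lambda_2\tau_s)$ for $\lambda_1\le\lambda_2$, the percentile $\bar{N}(\lambda)$ is non‑decreasing (and piecewise constant, jumping where the Poisson tail crosses $\epsilon$), hence $N_c(\lambda)$ is non‑decreasing through the monotone, ceiling‑valued expression of that corollary. Substituting into Lemma~\ref{lem:CDMA_Pt}, for each fixed $g$ the bracket $\mu g\big(N_c\mu_t^{-1}-(\bar{N}-1)\big)$ with $\mu_t=2^{LN_c/(W\tau_s)}-1$ is non‑increasing in $\lambda$ — $N_c\mu_t^{-1}$ shrinks exponentially as $N_c$ grows while $\bar{N}-1$ grows — so $P_t(\lambda)$ is pointwise non‑decreasing in $\lambda$ (with the convention that a non‑positive bracket means the target $\sinr$ is unachievable at any power, i.e.\ certain outage). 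Hence $\delta(\lambda)$ is non‑decreasing, and since $\epsilon$ and $\pc$ are held fixed, the bound of Proposition~\ref{prop:CDMA_Pf} is non‑decreasing in $\delta(\lambda)$, hence in $\lambda$. Therefore the feasible set is an interval $[0,\lambda_{\rm max}]$: it contains $\lambda=0$, where $P_t\to 0$ and the $P_f$‑bound reduces to $\epsilon\le p$; right‑continuity of the tail probabilities in $\lambda$ gives attainment of the maximum, and otherwise one simply reads $\max$ as $\sup$. The displayed $\lambda_{\rm max}$ is then exactly the right endpoint of this interval.

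The only places I would be careful are (i) justifying monotonicity of the percentile $\bar{N}(\lambda)$ and of the ceiling in $N_c(\lambda)$, which is elementary but should be stated, and (ii) the high‑load regime $N_c\mu_t^{-1}\le\bar{N}-1$, where no transmit power achieves $\mu_t$ and which must be interpreted as guaranteed outage, consistent with the definition and with such $\lambda$ being automatically infeasible. I would also note that boundedness above of the feasible set — what makes the definition non‑vacuous — follows because $\delta(\lambda)\to 1$ as $\lambda\to\infty$, so for sufficiently small prescribed $\delta$ and $p$ the constraints eventually fail. I do not expect any deep obstacle: the substance is well‑posedness and identifying the binding constraints, not a hard estimate.
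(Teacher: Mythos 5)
Your proposal is correct, but it does substantially more than the paper, which offers no proof at all: the corollary is presented as an operational definition that follows immediately from the outage definition $\delta=\P[P_t>P_{\rm max}]$ and the failure-probability bound of Proposition~\ref{prop:CDMA_Pf}, with the text merely remarking that the bound is tight for small $\delta$. What you add is a well-posedness argument: the pipeline $\lambda\mapsto\bar N\mapsto N_c\mapsto P_t(\lambda)$, stochastic monotonicity of the Poisson percentile, monotonicity of $N_c\mu_t^{-1}-(\bar N-1)$ in $\lambda$ (so that $P_t(\lambda)$ from Lemma~\ref{lem:CDMA_Pt} is pointwise non-decreasing and the feasible set is a down-closed interval), divergence of the outage probability as $\lambda\to\infty$ (so the set is bounded), and the interpretation of a non-positive bracket as certain outage. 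None of this is in the paper, and it is a genuine strengthening: it shows the $\max$ in the statement is over a non-empty, bounded, down-closed set rather than an arbitrary collection, which is what makes $\lambda_{\rm max}$ a meaningful threshold; the paper's implicit reading buys brevity at the cost of leaving monotonicity and attainment unexamined. The only step I would tighten is your attainment argument: $\bar N(\lambda)$ is piecewise constant with jumps, so the induced $\delta(\lambda)$ is a non-decreasing step-like function and the supremum is attained by one-sided continuity considerations rather than by continuity of the Poisson tail per se; you hedge this correctly by allowing $\max$ to be read as $\sup$, which is all the statement needs.
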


\begin{example}[CDMA Maximum Load]
\label{eg:CDMA}
For general system parameters $L=1000$ bits, $W=1$ MHz, $\mu = -3$ dB, $\tau_s = 1$ sec, and CDMA specific parameters $P_f = .05$, $\epsilon = .01$, $\delta=0$,  and no fading, the maximum load that a base station can handle under CDMA random access is $\lambda \approx 1350$ arrivals per second.
\end{example}

\subsection{FDMA Random Access}
FDMA random access design follows on the same lines as that of CDMA random access, with the only difference that the devices now choose one of the $N_f$ orthogonal channels and when the two devices choose different channels, there is no interference. Using the collision probability result derived in Lemma~\ref{lem:Pc}, the number of orthogonal channels required in this case can be expressed as:
\begin{align}
N_f = \frac{1}{1 - (1-\pc)^{\frac{1}{\bar{N} - 1}}},
\end{align}
from which the transmit power can be expressed as:
\begin{align}
P_t = \frac{2^{\frac{L N_f}{W \tau_s}} - 1}{\mu g N_f}.
\end{align}
For this new transmit power expression, the maximum load supported by a base station can be defined in the same way as done for the CDMA case.
\begin{example}[FDMA Maximum Load]
\label{eg:FDMA_RandAccess}
For the same system parameters as that of Example~\ref{eg:CDMA}, the maximum load that a base station can handle under FDMA random access is $\lambda \approx 160$ arrivals per second, which is order of magnitude lower than that of CDMA case.
\end{example}

\section{Coordinated Transmissions}

In this section, we discuss the design of coordinated strategies. We assume that the dedicated uplink connections are already established in the random access stage as discussed in the previous section. As a result, the base station has complete information about the $N_a$ arrivals, which are to be coordinated. As is clear from \eqref{eq:tau_min} and \eqref{eq:W_min} derived in Section~\ref{sec:sysmod}, there is a minimum transmission time $\tau_{\rm min}$ and minimum bandwidth $W_{\rm \min}$ required for the successful transmission of a given payload $L$, which appears due to the constraint on maximum power at which a device can transmit. Due to these constraints on the minimum resources required, there is clearly a fundamental limit on the number of packets that can be scheduled in a given time-frequency resource block. This limit defines the maximum load a base station can handle and is dependent upon the resource partitioning strategy being employed. Since the devices farther out near the cell edge or in deep fade require more resources than the others, we assume that the base station deliberately drops a small fraction of these arrivals to increase the maximum load it can handle and to reduce the overall average transmit power or energy~\cite{NeeJ2009}.

For concreteness, we assume that the base station always drops $\delta_1$ fraction of the arrivals, while ensuring that the total system outage is always less than $\delta$. We will comment on the relationship between $\delta_1$ and $\delta$ later in this section. Denote the number of devices actually served in each slot by $K \leq (1-\delta_1)N_a$. It should be noted that the dropped arrivals are the ones having smallest channel gains. For resource partitioning, we consider two approaches: i) TDMA -- splitting time slot $\tau_s$ into $K$ parts, and ii) FDMA -- splitting bandwidth $W$ into $K$ parts, where the goal in both the approaches is to minimize total transmit power or energy. 
As shown later in this section, the optimization problems to find power and energy optimal schedules are convex and hence can be solved efficiently using known algorithms, such as the MoveRight algorithm~\cite{ElNaiC2002}. However, we show that the numerical optimization is not necessary since a near-optimal tractable solution can be found where the resources allocated to each device solely depend upon its channel gain and are independent of the channel gains of other devices. More interestingly, we analytically show that both the average transmit power and energy under equal resource allocation is within a small factor of the optimal values in both TDMA and FDMA. Furthermore, we also compare the results with the power optimal multiuser detection strategy, i.e., SIC, and show that FDMA (including simple equal bandwidth allocation) is power optimal in the limit of asymptotically low spectral efficiency.

In all the coordinated strategies discussed above, once the base station decides the resource allocation it is relayed to the devices as a part of the beacon signal, which is then used by the devices to transmit over the allocated resources. We first derive the power required in SIC, which is global optimal in terms of total power minimization and will serve as a benchmark for the orthogonal transmission strategies discussed later in this section.

\subsection{Power Optimal Coordinated Strategy}
For $K$ users, assuming the channel gains are ordered in the increasing order, i.e., $g_1 \leq g_2 \leq \ldots g_K$, the optimal strategy is the weakest-last interference cancellation strategy, in which the user with the strongest channel gain is decoded first assuming interference from the other users as noise~\cite{TseVisB2005}. This signal is then cancelled from the received signal while decoding the user with the second best channel gain. The process is repeated until the last user is decoded. Therefore, the user with the strongest channel gain sees interference from all the other users, whereas the user with the weakest channel gain does not see any interference. The total transmit power under this strategy is given by the following theorem.
\begin{theorem}
For $K$ users with $g_1 \leq g_2 \leq \ldots g_K$, the total transmit  power for the weakest-last successive interference cancellation strategy is:
\begin{align}
P = \frac{2^{ \frac{L}{W \tau_s} } -1 }{\mu}   \sum_{k=1}^K \frac{2^{ \frac{(k-1)L}{W \tau_s} } }{g_k}.
\label{eq:SIC_sumP}
\end{align}
\end{theorem}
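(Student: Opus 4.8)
The plan is to turn the weakest-last SIC decoding order into a chain of per-user SINR requirements, one for each of the $K$ users, and then solve the resulting linear recursion in closed form. Since every user must deliver $L$ bits over the full slot $\tau_s$ on the full band $W$, Shannon's equation \eqref{eq:Shannon} forces the effective SINR seen by each user at the stage where it is decoded to be at least $\beta \nbydef 2^{L/(W\tau_s)}-1$; for a minimum-power schedule each of these inequalities holds with equality, since any slack can be removed by lowering that user's transmit power, which does not hurt any earlier-decoded user and only relaxes the later stages.

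Writing $S_k \nbydef P_{t,k}\,\mu g_k$ for the received $\snr$ of user $k$ (using \eqref{eq:SNR} with $P_{\max}=1$), and noting that in these units the noise level is $1$, the weakest-last order --- decode $g_K$ first treating $g_1,\dots,g_{K-1}$ as interference, cancel it, then decode $g_{K-1}$, and so on down to $g_1$, which is interference-free --- gives, for each $k=1,\dots,K$,
\[
S_k = \beta\Bigl(1+\sum_{j=1}^{k-1}S_j\Bigr).
\]
To solve this, set $T_k \nbydef \sum_{j=1}^k S_j$ with $T_0=0$. Then $T_k = T_{k-1}+S_k = (1+\beta)T_{k-1}+\beta$, i.e.\ $T_k+1 = (1+\beta)(T_{k-1}+1)$, so $T_k+1=(1+\beta)^k$ and hence $S_k = \beta\,(1+T_{k-1}) = \beta(1+\beta)^{k-1}$. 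Converting back, $P_{t,k} = S_k/(\mu g_k) = \beta(1+\beta)^{k-1}/(\mu g_k)$, and summing over $k$ while substituting $1+\beta = 2^{L/(W\tau_s)}$ and $\beta = 2^{L/(W\tau_s)}-1$ yields exactly \eqref{eq:SIC_sumP}.

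I do not expect a genuine obstacle here: the recursion is linear and telescopes after the substitution $T_k\mapsto T_k+1$, and the closed form it produces is automatically positive, so the equality constraints are simultaneously feasible. The only point worth a sentence of care is the reduction that equality in every SINR constraint is without loss of optimality for the given decoding order, which is the monotonicity argument sketched above; optimality of the weakest-last ordering itself is quoted from \cite{TseVisB2005} and need not be reproved.
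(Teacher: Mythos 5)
Your proof is correct and follows essentially the same route as the paper: write the per-user Shannon/SINR equality for the weakest-last decoding order and unwind the resulting linear recursion, which the paper does by direct substitution for $P_k$ and you do slightly more explicitly via the cumulative received SNR $T_k$ and the telescoping identity $T_k+1=(1+\beta)(T_{k-1}+1)$. The added remark that meeting each SINR constraint with equality is without loss of optimality is a harmless refinement of what the paper leaves implicit.
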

\begin{proof}
Under this particular decoding strategy, the Shannon's capacity expression of the $k^{th}$ user is:
\begin{align}
\frac{L}{W\tau_s} = \log_2 \left( 1 + \frac{P_k \mu g_k}{1 + \sum_{i=1}^{k-1} P_i \mu g_i}  \right),
\end{align}
from which we can derive the following relationship in the transmit powers:
\begin{align}
P_k \mu g_k = \left(2^{\frac{L}{W\tau_s}} - 1 \right)\left(1 + \sum_{i=1}^{k-1} P_i \mu g_i   \right).
\end{align}
Starting from $k=1$ and solving for the transmit powers $P_k$ by substutiting all the preceding values in the above expression leads to the following closed form expression for $P_k$:
\begin{align}
P_k = \frac{2^{ \frac{(k-1)L}{W \tau_s} } \left(2^{ \frac{L}{W \tau_s} } -1 \right)}{g_k \mu},
\end{align}
which completes the proof.
\end{proof}

We now discuss the TDMA system design in detail and then show that the results for FDMA case directly follow.

\subsection{TDMA System Design}
We first find the optimal power or energy partition of $\tau_s = \{\tau_1, \tau_2, \ldots, \tau_{K}\}$ such that the $i^{th}$ device transmits its $L$ information bits over bandwidth $W$ in time $\tau_i$. Using Shannon's capacity expression, the total transmit power required under this allocation is:
\begin{align}
P =  \sum\limits_{i=1}^{K} \frac{2^{\frac{L}{W \tau_i}}-1}{\mu g_i}.
\end{align}
Similarly the total energy per bit can be expressed as:
\begin{align}
E_b = \sum\limits_{i=1}^{K} \frac{\tau_i}{L}\frac{\left(2^{\frac{L}{W \tau_i}}-1\right)}{\mu g_i}.
\end{align}
As discussed above, the maximum power constraints leads to the following constraint on the transmission time:
\begin{align}
\tau_i \ge \frac{L}{W \log_2 (1+P_{\rm max} \mu g_i)}.
\end{align}
The optimization problem to find the power (or energy) optimal schedule can now be formulated as follows:
\begin{align}
\min_{\{\tau_i\}}\ \ \ & \sum\limits_{i=1}^{K} u(\tau_i) \nonumber \\
s.t.\ \ \ & \sum\limits_{i=1}^{K} \tau_i \leq \tau_s \nonumber \\
& \tau_i \geq \frac{L}{W \log_2 (1+P_{\rm max} \mu g_i)} (= \tau_{{\rm min}_i}) \nonumber \\
& 1 \leq i \leq K, \nonumber
\end{align}
where the function $u(\tau_i)$ is chosen appropriately depending upon whether the goal is to minimize total power or energy. Regardless of this choice, the optimization problem remains convex and hence can be efficiently solved using known algorithms~\cite{ElNaiC2002}. Clearly the power and energy optimal allocations are in general different for TDMA.

\begin{remark}[Feasibility and Maximum Load]
\label{rem:TDMAfeasibility} The above optimization problem is feasible if the constraint $\sum_{i=1}^K \tau_{{\rm min}_i} \leq \tau_s $ is satisfied. Using this constraint, an approximation on the maximum load can be derived as follows:
\begin{align}
\frac{\sum_{i=1}^K \tau_{{\rm min}_i}}{K} \leq \frac{\tau_s}{K}
\Rightarrow
K \leq \frac{\tau_s}{\sum_{i=1}^K \tau_{{\rm min}_i}/K}
\stackrel{(a)}{\approx} \frac{\tau_s}{\E[\tau_{\rm min}]},
\end{align}
where $(a)$ follows from two approximations: i) Strong Law of Large Numbers (SLLN) holds and the average converges to the mean of the random variables, ii) the mean is $\E[\tau_{\rm min}]$, which is not exact because $\delta N_a$ arrivals with smallest channel gains have been removed and hence order statistics should be used to compute the mean of the remaining random variables. Nevertheless, this approximation is tight when $\delta \rightarrow 0$, which is the regime of interest in this system design. More formally, the maximum load can be defined in terms of a given outage constraint as follows:
\begin{align}
\lambda_{\rm max} = \max \left\{ \lambda: \P\left[ N_a (1-\delta_1) \geq K_{\rm max}     \right] \leq \epsilon_1 \right\},
\end{align}
where $K_{\rm max}$ is defined by ordering the arrivals in terms of decreasing channel gains as follows:
\begin{align}
K_{\rm max} = \max_K \sum_{i=1}^K \tau_i \leq \tau_s.
\end{align}
The parameters $\delta_1$ and $\epsilon_1$ should be chosen such that the total system outage is less than $\delta$. This is made precise in the following Proposition.
\end{remark}

\begin{figure}[t]
\centering
\includegraphics[scale=.45]{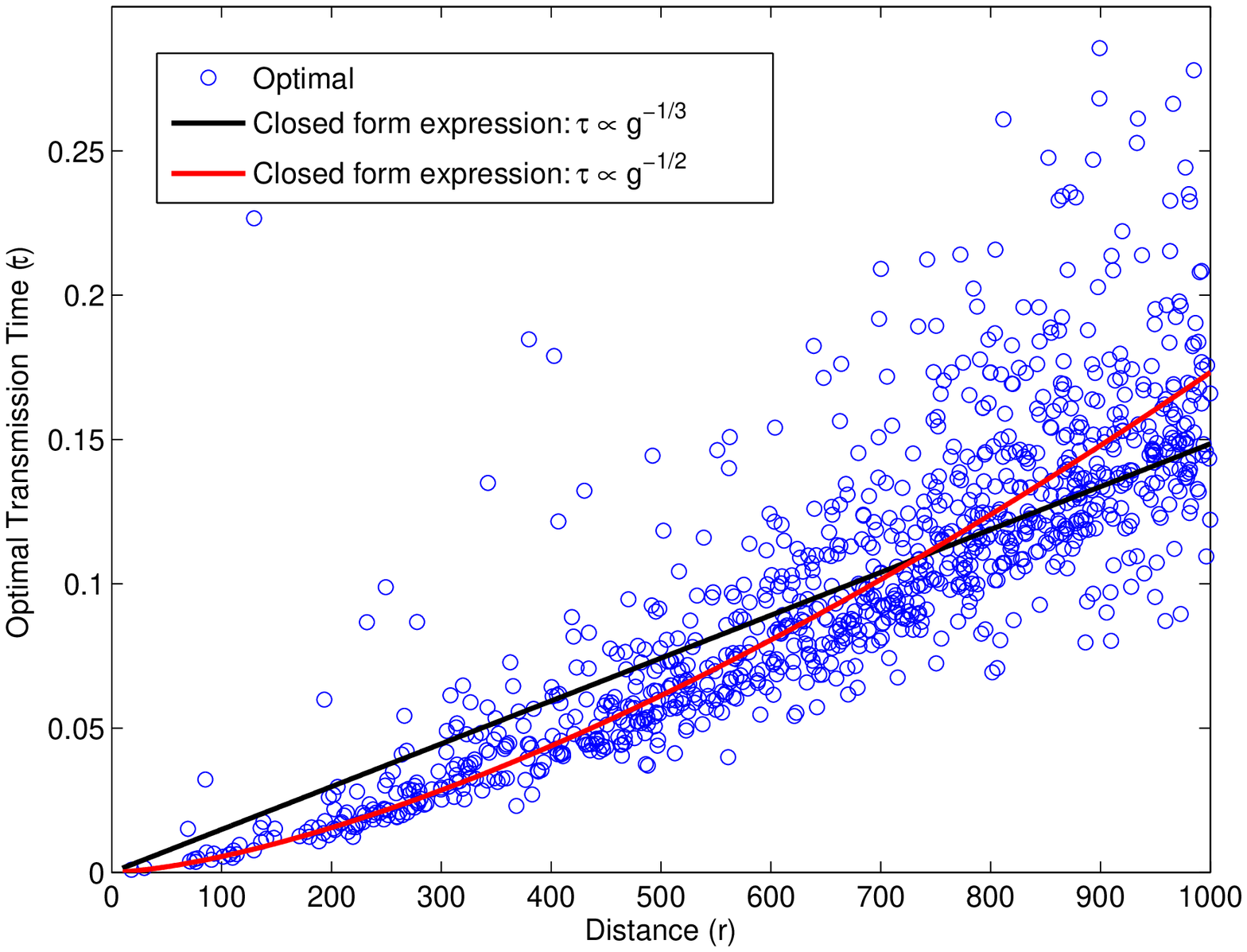}
\includegraphics[scale=.45]{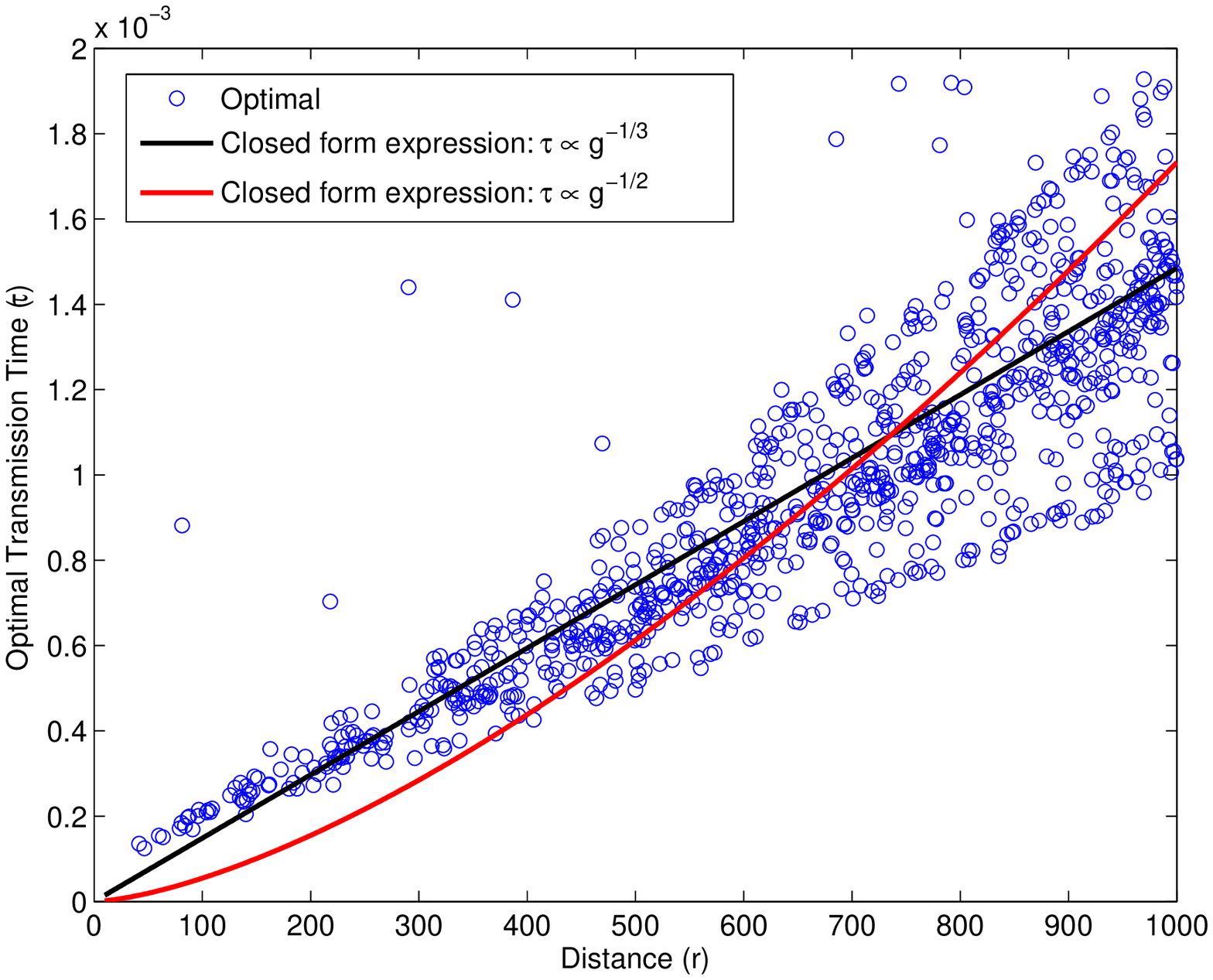}
\caption{Comparison of the closed form solution given by \eqref{eq:ClosedFormApprox} with the power optimal solution (scatter plot). (first) low load ($\lambda = 10$). (second) high load ($\lambda = 1000$). The fading and shadowing is averaged out.}
\label{fig:PowLambdaScatter}
\end{figure}

\begin{prop}[TDMA Outage]
The total system outage $\delta$ can be upper bounded in terms of the failure probabilities $\delta_1$ and $\epsilon_1$ as
\begin{align}
\delta \leq \epsilon_1 + \delta_1 (1-\epsilon_1).
\end{align}
\end{prop}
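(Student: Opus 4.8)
The plan is to bound $\delta$ — the probability that a typical scheduled device fails to deliver its packet — by a law-of-total-probability argument over the two distinct failure mechanisms, mirroring the structure of Proposition~\ref{prop:CDMA_Pf} in the CDMA case. The first mechanism is the ``overflow'' event $A = \{N_a(1-\delta_1) \ge K_{\rm max}\}$, in which even after discarding the prescribed $\delta_1$ fraction there remain too many surviving devices to fit inside the slot $\tau_s$; by the very definition of $\lambda_{\rm max}$ in Remark~\ref{rem:TDMAfeasibility}, operating at $\lambda \le \lambda_{\rm max}$ guarantees $\P[A] \le \epsilon_1$. The second mechanism, present even on $A^c$, is that the typical device happens to lie in the $\delta_1$ fraction of smallest channel gains that the base station deliberately drops.

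First I would condition on $A$, writing $\delta = \P[\text{fail}\mid A]\,\P[A] + \P[\text{fail}\mid A^c]\,\P[A^c]$. On $A$ I bound the conditional failure probability trivially by $1$, conservatively treating an overflow slot as a total loss. On $A^c$ no extra devices are dropped, so the only way the tagged device fails is by belonging to the deterministically discarded bottom-$\delta_1$ fraction; since the devices are exchangeable (i.i.d.\ uniform in the annulus, i.i.d.\ fading and shadowing), a uniformly tagged device lies in that fraction with probability at most $\delta_1$, so $\P[\text{fail}\mid A^c] \le \delta_1$. Substituting gives $\delta \le \P[A] + \delta_1\bigl(1-\P[A]\bigr)$, and because the right-hand side is non-decreasing in $\P[A]$ whenever $\delta_1 \le 1$, replacing $\P[A]$ by its upper bound $\epsilon_1$ yields $\delta \le \epsilon_1 + \delta_1(1-\epsilon_1)$, as claimed.

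The step I expect to be the main obstacle is making the estimate $\P[\text{fail}\mid A^c] \le \delta_1$ fully rigorous: it rests on treating the dropped set as an exact $\delta_1$-fraction and on the symmetry of the devices, but $K_{\rm max}$ — and hence the event $A$ itself — depends on the whole vector of channel gains, so conditioning on $A^c$ is not literally independent of which device is tagged. The cleanest remedy is to condition additionally on $N_a$ and on the order statistics of the gains, invoke exchangeability within each such conditioning, and then average out, absorbing the residual slack from the SLLN/order-statistics approximation already flagged in Remark~\ref{rem:TDMAfeasibility} (tight as $\delta_1 \to 0$) into the inequality. It is also worth stating explicitly why the result is an inequality rather than an equality: it is loose precisely when $\lambda < \lambda_{\rm max}$, so that $\P[A] < \epsilon_1$, and when some overflow slots still deliver the tagged packet.
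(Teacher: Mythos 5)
Your proof is correct and follows essentially the same route as the paper: the same decomposition over the overflow event $\{(1-\delta_1)N_a \geq K_{\rm max}\}$, bounding the conditional failure probability by $1$ on that event and by $\delta_1$ on its complement, with $\P[\text{overflow}] \leq \epsilon_1$ from the maximum-load definition. Your extra care about monotonicity in $\P[A]$ and about exchangeability of the tagged device is a refinement the paper glosses over, not a different argument.
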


\begin{proof}
Using similar ideas as in Proposition~\ref{prop:CDMA_Pf}, the total outage $\delta$ can be written in terms of the failure event $\calA$ as:
\begin{align}
\delta =&\ \P[\calA | (1-\delta_1)N_a \geq K_{\rm max}] \P[(1-\delta_1)N_a \geq K_{\rm max}] + \nonumber \\
 &\ \P[\calA | (1-\delta_1)N_a < K_{\rm max}] \P[(1-\delta_1)N_a < K_{\rm max}]\\
 =&\ \epsilon_1 + \delta_1 (1-\epsilon_1),
\end{align}
where we bounded the term $\P[\calA | (1-\delta_1)N_a \geq K_{\rm max}]$ by $1$.
\end{proof}
This bound is tight especially when $\epsilon_1 \rightarrow 0$, which will  be the regime of interest for this paper.

\subsubsection{Near Optimal Closed-Form Solution} The form of the optimization problem is such that the exact closed form solutions are not possible. We now show that it is possible to derive approximate near-optimal closed form results, where the transmission time is solely a function of the channel gain of the device of interest independent of the channel gains of other devices. We demonstrate it for the power minimization problem, where the total transmit power can be expressed as:
\begin{align}
P &=  \sum\limits_{i=1}^{K} \frac{2^{\frac{L}{W \tau_i}}-1}{\mu g_i} \\
&= \frac{2^{\frac{L}{W \tau_1}}-1}{\mu g_1} + \frac{2^{\frac{L}{W \tau_2}}-1}{\mu g_2} \ldots \frac{ 2^{\frac{L}{W \left(\tau_s - \sum\limits_{j=1}^{K-1} \tau_j \right)}}-1}{\mu g_{K}}.
\end{align}
Minimizing the transmit power $P$ w.r.t. $\tau_1$ we get
\begin{align}
\label{eq:optcond}
\frac{\delta P }{\delta \tau_1} = 0 \Rightarrow \frac{2^{\frac{L}{W \tau_1}}}{g_1 \tau_1^2} = \frac{2^{\frac{L}{W \tau_N}}}{g_N \tau_N^2} \Rightarrow \tau_i^2 2^{\frac{-L}{W \tau_1}} \propto g_i^{-1}
\end{align}
\begin{remark} For small $\lambda$, $\tau$ is of the order of seconds, and for our choice of $L$ and $W$, $L/W = 10^{-3}$, which implies $2^{\frac{-L}{W \tau}} \rightarrow 1$. Therefore, $\tau \propto g^{-1/2}$. On the other hand, for large $\lambda$ (say $\lambda = 1000$), and $L/W = 10^{-3}$, we have $2^{\frac{L}{W \tau}} \approx \frac{2L}{W \tau}$. Therefore, $\tau \propto g^{-1/3}$ in this regime.
\end{remark}
From the above remark, we note that the transmission time can be expressed solely as a function of channel gain as
\begin{align}
\tau_i = \frac{f(g_i)}{\sum\limits_j f(g_j)} \tau_s,
\end{align}
which reduces further to
\begin{align}
\tau_i = \frac{f(g_i)}{\lambda E[f(g)]},
\end{align}
for reasonably high values of $\lambda$. Ignoring fading and shadowing and assuming $f(g_i) = g_i^{-1/n}$ that encompasses both the special cases of $n=2$ and $n=3$, it can be expressed in the closed form as follows:
\begin{align}
\tau_i = \frac{2+\gamma/n}{2\lambda} r^{\gamma/n} \frac{r_0^2 - r_i^2}{r_0^{2+\gamma/n} - r_i^{2+\gamma/n}}.
\label{eq:ClosedFormApprox}
\end{align}
In Fig.~\ref{fig:PowLambdaScatter}, we plot the closed form result \eqref{eq:ClosedFormApprox} along with a scatter plot of the optimal transmission times in two regimes: i) low load ($\lambda = 10$) and ii) high load ($\lambda = 1000$). We note that the approximations for both the regimes are quite accurate, which is more surprising for the low load case since the closed form expression was derived assuming that the value of $\lambda$ is reasonably high. Interested readers can refer to~\cite{DhiHuaC2012} for further details on the accuracy of this simple approach.

For the energy optimal solution, we present a similar comparison in Fig.~\ref{fig:EnLambdaScatter}, which shows that the approximation $\tau \propto g^{-1/3}$ is accurate both in the low and high load regimes. This leads to the same closed form solution in both the regimes.

\begin{figure}[ht]
\centering
\includegraphics[scale=.45]{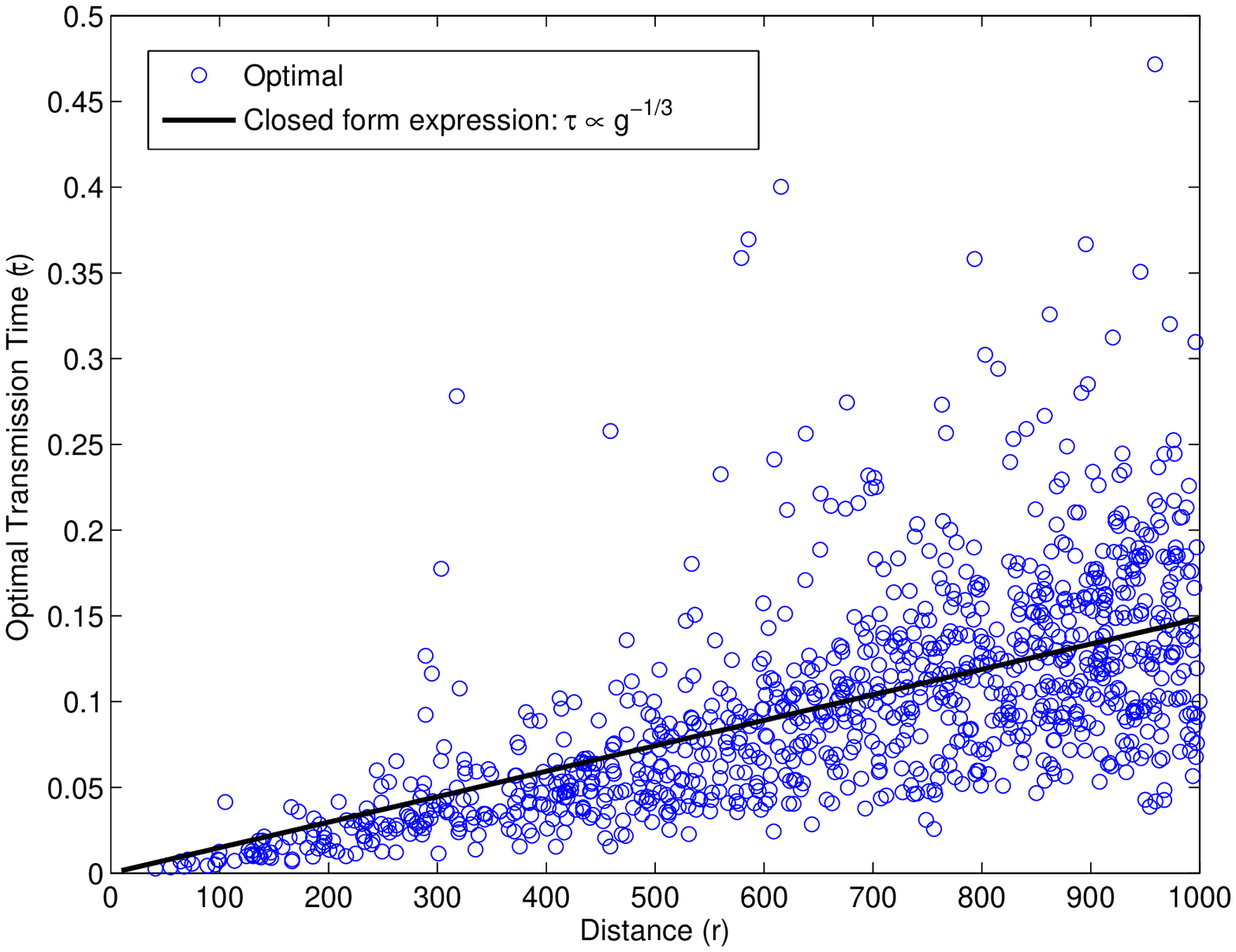}
\includegraphics[scale=.45]{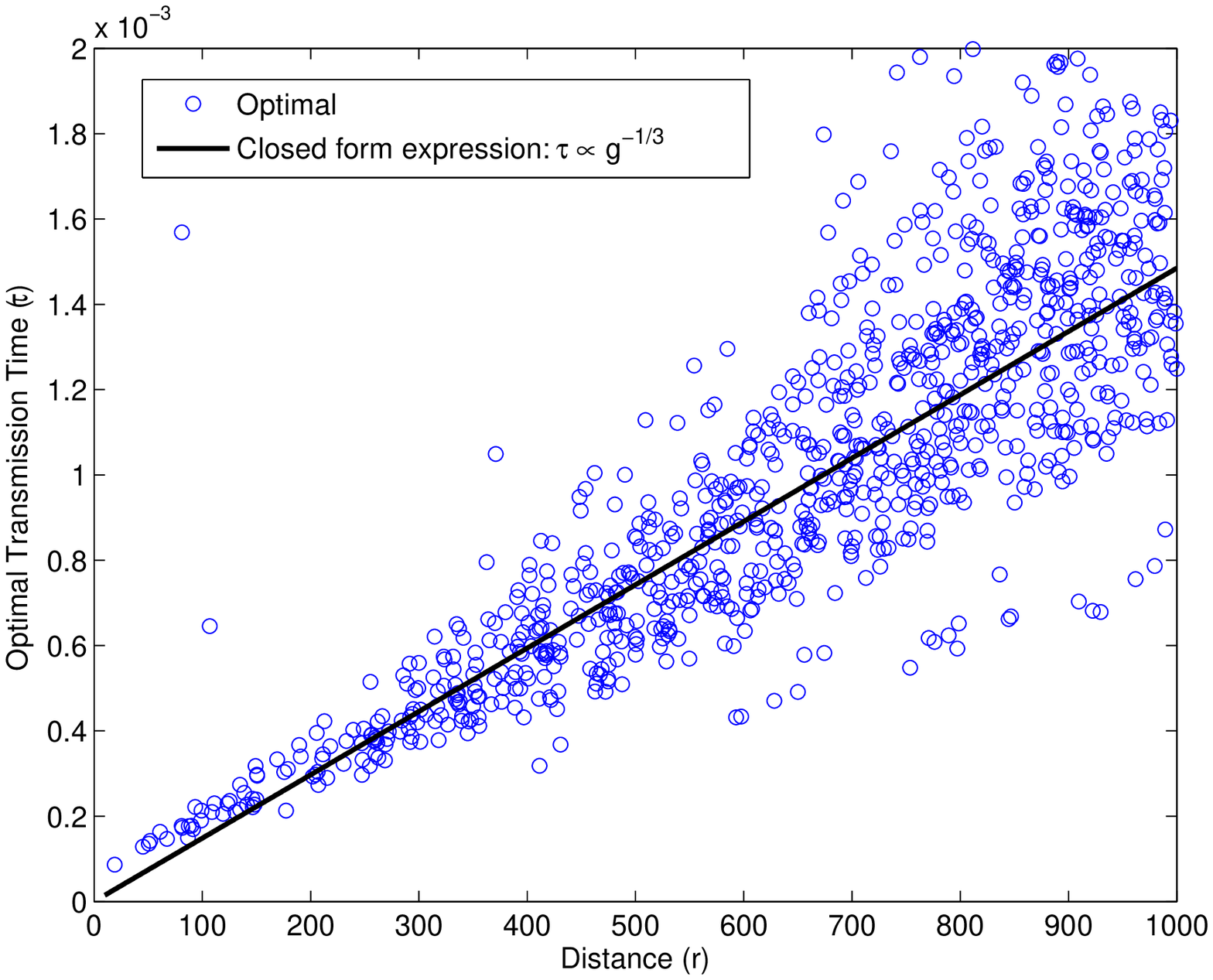}
\caption{Same as Fig.\ref{fig:PowLambdaScatter} except that the scatter point now corresponds to energy optimal solution. (first) low load ($\lambda = 10$). (second) high load ($\lambda = 1000$). The same closed form solution works in both the regimes.}
\label{fig:EnLambdaScatter}
\end{figure}

\subsubsection{Equal Time vs Power Optimal Allocation}
We now consider an even simpler case in which each device is allocated same transmission time and show that the transmit power under this simple strategy is always within a small constant of the optimal power in the parameter space of interest. The result is given by the following Theorem and the proof is given in Appendix~\ref{th:eqTDMA_proof}.

\begin{theorem}
\label{th:eqTDMA}
The ratio of total transmit powers of $K$ devices under uniform $(U_{g_1, g_2 \ldots g_K})$ and optimal $(P_{g_1, g_2 \ldots g_K})$ schedules is upper bounded by:
\begin{equation}
\frac{U_{g_1, g_2 \ldots g_K}}{P_{g_1, g_2 \ldots g_K}} \le \frac{2^{\frac{KL}{W\tau_s}}-1}{2^{\frac{L\sum\limits_{j=1}^K \sqrt{\frac{g_1}{g_j}}}{W\tau_s}}-1},
\label{eq:EqOptPowBound_Th}
\end{equation}
where $g_1 = \min\{g_j\}$.
\end{theorem}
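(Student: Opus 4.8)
The plan is to get a clean lower bound on the optimal power $P_{g_1,\ldots,g_K}$, pair it with an exact expression for the uniform power $U_{g_1,\ldots,g_K}$, and then just divide. For the uniform schedule $\tau_i=\tau_s/K$ the power formula gives directly
\begin{equation}
U_{g_1,\ldots,g_K}=\left(2^{\frac{KL}{W\tau_s}}-1\right)\sum_{i=1}^{K}\frac{1}{\mu g_i},\nonumber
\end{equation}
so, since this common factor $\sum_i 1/(\mu g_i)$ will cancel, the entire content of the theorem is the inequality $P_{g_1,\ldots,g_K}\ge\bigl(2^{L\sum_j\sqrt{g_1/g_j}/(W\tau_s)}-1\bigr)\sum_{i=1}^K 1/(\mu g_i)$.

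To avoid dealing with the minimum-time constraints $\tau_i\ge\tau_{{\rm min}_i}$, I would first relax the TDMA problem to minimizing $\sum_i(2^{L/(W\tau_i)}-1)/(\mu g_i)$ subject only to $\sum_i\tau_i\le\tau_s$; enlarging the feasible set can only lower the optimum, so the relaxed optimum is a valid lower bound for $P_{g_1,\ldots,g_K}$. For the relaxed problem the minimizer is interior (the objective blows up as any $\tau_i\to0^+$ and is strictly decreasing in each $\tau_i$, forcing $\sum_i\tau_i=\tau_s$), so the stationarity condition \eqref{eq:optcond} holds exactly: $2^{L/(W\tau_i)}/(g_i\tau_i^2)$ is the same for every $i$.

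The key step — the only part that is not bookkeeping — is to squeeze $\tau_1$, the time given to the weakest device $g_1=\min_j g_j$. First, stationarity forces $\tau_1=\max_i\tau_i$: writing \eqref{eq:optcond} as $\tau_i^2/\tau_1^2=(g_1/g_i)\,2^{L/(W\tau_i)-L/(W\tau_1)}$, if $\tau_1\le\tau_i$ for some $i$ with $g_i>g_1$ then the exponential factor is $\le1$, so $\tau_i^2/\tau_1^2<1$, contradicting $\tau_1\le\tau_i$. Feeding $\tau_i\le\tau_1$ back into that same identity makes the exponential factor $\ge1$, hence $\tau_i\ge\tau_1\sqrt{g_1/g_i}$ for every $i$; summing over $i$ and using $\sum_i\tau_i=\tau_s$ gives $\tau_1\le\tau_s/\sum_{j}\sqrt{g_1/g_j}$. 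Combining with $\tau_i\le\tau_1$ then yields $\tau_i\le\tau_s/\sum_j\sqrt{g_1/g_j}$ for all $i$, so $2^{L/(W\tau_i)}\ge 2^{L\sum_j\sqrt{g_1/g_j}/(W\tau_s)}$ for every $i$; summing the per-device powers delivers the desired lower bound on $P_{g_1,\ldots,g_K}$, and dividing it into $U_{g_1,\ldots,g_K}$ produces \eqref{eq:EqOptPowBound_Th}.

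I expect the main obstacle to be the rigor around the optimal schedule rather than any hard estimate: justifying the relaxation (or, if one keeps the $\tau_{{\rm min}_i}$ constraints, noting that a device pinned to its floor only has \emph{less} time, which strengthens the lower bound, while $\tau_{{\rm min}_1}$ being the largest floor preserves $\tau_1=\max_i\tau_i$), and making the ``$\tau_1=\max_i\tau_i$'' comparison airtight when several $g_i$ coincide (there stationarity forces the corresponding $\tau_i$ to be equal). Everything downstream of the bound on $\tau_1$ is just monotonicity of $\tau\mapsto 2^{L/(W\tau)}$.
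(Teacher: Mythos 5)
Your proof is correct and follows essentially the same route as the paper: both rest on the stationarity condition $2^{L/(W\tau_i)}/(g_i\tau_i^2)=\mathrm{const}$, the consequence $\tau_i^*\ge\tau_1^*\sqrt{g_1/g_i}$, and the resulting bound $\tau_1^*\le\tau_s\big/\sum_{j=1}^K\sqrt{g_1/g_j}$, the only difference being bookkeeping -- you lower-bound every term of the optimal power by this worst-case time and divide into the exact uniform power, while the paper first reduces the ratio of sums to the weakest device via $\sum_i U_{g_i}/\sum_i P_{g_i}\le\max_i U_{g_i}/P_{g_i}$. Your explicit relaxation of the $\tau_{{\rm min}_i}$ constraints and the careful argument that $\tau_1^*=\max_i\tau_i^*$ (including ties) make rigorous two points the paper leaves implicit.
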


This bound is surprisingly tight as shown in the following example.

\begin{example}[Bound in the parameter space of interest] For $K=1000$, $\tau_s = 1$ sec, $L=1000$ bits, $W = 1$ MHz and no fading, the bound given by \eqref{eq:EqOptPowBound_Th} is $\approx 2$, i.e., it is guaranteed that the transmit power under equal resource allocation is no more than around $3$ dB worse than the optimal power. In \cite{DhiHuaC2012} we have numerically shown that the actual gap is much smaller.
\label{eg:Bound_TDMAPow}
\end{example}

\subsubsection{Equal Time vs Energy Optimal Allocation}
On the same lines, we now compare the energy per bit required under equal time allocation with that of the energy optimal allocation. Before deriving the main result, we first derive an upper bound on the optimal time allocated to the device corresponding to the minimum channel gain. The proof is given in Appendix~\ref{lem:tau1_UBproof}.

\begin{lemma}
\label{lem:tau1_UB}
The optimal time allocated to the device corresponding to the minimum channel gain can be upper bounded as
\begin{align}
\tau_1^* \leq \frac{\tau_s}{\sum_{i=1}^K g_1/g_i},
\end{align}
where $g_1 = \min\{g_j\}$.
\end{lemma}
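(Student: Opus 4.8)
The plan is to prove the pointwise inequality $\frac{g_1}{g_i}\,\tau_1^\ast \le \tau_i^\ast$ for every $i$ and then sum. At the optimum the time budget must be exhausted, i.e.\ $\sum_i\tau_i^\ast=\tau_s$: the per-device cost is $u(\tau_i)=\frac{1}{L\mu g_i}\,\phi(\tau_i)$ with $\phi(\tau):=\tau\!\left(2^{L/(W\tau)}-1\right)$, and $\phi$ is strictly decreasing (a short computation gives $\phi'(\tau)=e^{v}(1-v)-1<0$ for $v:=\frac{L\ln 2}{W\tau}>0$), so leaving any slot time unused is strictly suboptimal. Granting the pointwise bound, summation gives $\tau_1^\ast\sum_i g_1/g_i\le\sum_i\tau_i^\ast=\tau_s$, which is the claim.

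For the pointwise bound I would split on whether the minimum-time constraint is active at the weakest device. If $\tau_1^\ast=\tau_{{\rm min}_1}$, then $\tau_i^\ast\ge\tau_{{\rm min}_i}$ for all $i$, so it suffices to verify $\tau_{{\rm min}_i}/\tau_{{\rm min}_1}=\log_2(1+P_{\rm max}\mu g_1)/\log_2(1+P_{\rm max}\mu g_i)\ge g_1/g_i$; this is precisely the statement that $g\mapsto\frac1g\log_2(1+P_{\rm max}\mu g)$ is non-increasing, which is elementary (equivalently $\ln(1+t)\ge t/(1+t)$). If instead $\tau_1^\ast>\tau_{{\rm min}_1}$, I would invoke the KKT conditions of the (convex) program: with $\nu\ge 0$ the budget multiplier and $\eta_i\ge 0$ the minimum-time multipliers, stationarity reads $\phi'(\tau_i^\ast)/g_i=L\mu(\eta_i-\nu)$, and since $\eta_1=0$ we get $\phi'(\tau_i^\ast)/g_i\ge\phi'(\tau_1^\ast)/g_1$, i.e.\ $\phi'(\tau_i^\ast)\ge\tfrac{g_i}{g_1}\phi'(\tau_1^\ast)$.

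The crux is a one-variable monotonicity fact about $\phi$: for each fixed $\tau>0$ the map $s\mapsto s\,\phi'(s\tau)$ is non-decreasing on $(0,1]$, equivalently $\phi'(\tau)+\tau\phi''(\tau)\ge 0$. A direct computation gives $\phi''(\tau)=v^2 e^{v}/\tau>0$ (this is also why the program is convex and why $\phi'$ is strictly increasing) and $\phi'(\tau)+\tau\phi''(\tau)=e^{v}(v^2-v+1)-1$; the latter vanishes at $v=0$ and has derivative $e^{v}v(v+1)\ge 0$, hence is non-negative for $v\ge 0$. Granting this, put $\alpha:=g_1/g_i\le 1$ and apply the fact with $s=\alpha$, $\tau=\tau_1^\ast$: $\alpha\,\phi'(\alpha\tau_1^\ast)\le\phi'(\tau_1^\ast)\le\alpha\,\phi'(\tau_i^\ast)$, where the second inequality is the KKT bound rewritten as $\alpha\,\phi'(\tau_i^\ast)\ge\phi'(\tau_1^\ast)$. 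Dividing by $\alpha>0$ and using that $\phi'$ is strictly increasing gives $\alpha\tau_1^\ast\le\tau_i^\ast$, the desired pointwise bound.

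The main obstacle I anticipate is sign bookkeeping: $\phi'$ is negative throughout, so both the KKT inequality $\phi'(\tau_i^\ast)\ge\frac{g_i}{g_1}\phi'(\tau_1^\ast)$ and the scaling inequality $\alpha\phi'(\alpha\tau)\le\phi'(\tau)$ ``point upward'', and they must be chained in exactly this order before inverting the increasing function $\phi'$. The second pitfall is remembering that $\tau_1^\ast$ can genuinely hit $\tau_{{\rm min}_1}$, so the proof really does need the separate (easier) monotonicity of $\frac1g\log_2(1+P_{\rm max}\mu g)$ in addition to the monotonicity of $s\,\phi'(s\tau)$; everything else is routine.
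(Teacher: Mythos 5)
Your proof is correct: the computations $\phi'(\tau)=e^{v}(1-v)-1<0$, $\phi''(\tau)=v^{2}e^{v}/\tau>0$ and $\phi'(\tau)+\tau\phi''(\tau)=e^{v}(v^{2}-v+1)-1\ge 0$ all check out, and chaining the KKT inequality with the scaling inequality before inverting the strictly increasing $\phi'$ does yield the pointwise bound $\frac{g_1}{g_i}\tau_1^{*}\le\tau_i^{*}$, which sums to the claim. Structurally this is the paper's route---the paper likewise derives $g_1/g_i\le \tau_i^{*}/\tau_1^{*}$ from the first-order conditions and then sums the time budget---but your execution differs in two genuine ways. First, the paper manipulates the interior stationarity equation $\Psi_1'(\tau_1^{*})=\Psi_i'(\tau_i^{*})$ directly, expressing $g_1/g_i$ as a ratio and bounding it via the monotonicity of $\Phi(x)=2^{1/x}\ln 2-x(2^{1/x}-1)$ together with the auxiliary fact $\tau_1^{*}\ge\tau_i^{*}$; you instead rely on the one-variable fact that $u\mapsto u\,\phi'(u)$ is non-decreasing combined with strict convexity of $\phi$, which makes the separate step $\tau_1^{*}\ge\tau_i^{*}$ unnecessary. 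Second, you keep the constraints $\tau_i\ge\tau_{{\rm min}_i}$ in the KKT system (needing only $\eta_1=0$, $\eta_i\ge 0$) and handle the boundary case $\tau_1^{*}=\tau_{{\rm min}_1}$ separately through the monotonicity of $g\mapsto\frac{1}{g}\log_2(1+P_{\rm max}\mu g)$, whereas the paper's proof tacitly assumes an interior optimum at which stationarity holds with equality. Your version is thus slightly more rigorous and covers the active-constraint case the paper glosses over, at the cost of extra bookkeeping; the paper's argument is shorter but rests on the implicit interiority assumption.
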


From this Lemma the following result follows. The proof is in Appendix~\ref{thm:TDMA_EnBoundproof}.

\begin{theorem} \label{thm:TDMA_EnBound}
The ratio of the total energy per bit under uniform $(U_{g_1, g_2 \ldots g_K})$ and optimal $(E_{g_1, g_2 \ldots g_K})$ schedules is upper bounded by:
\begin{equation}
\frac{U_{g_1, g_2 \ldots g_K}}{E_{g_1, g_2 \ldots g_K}} \leq \frac{\sum_{i=1}^K g_1/g_i}{K} \frac{2^\frac{KL}{W\tau_s}-1}{2^\frac{L\sum_{i=1}^K g_1/g_i}{W\tau_s}-1},
\label{eq:EqOptEnBound_Th}
\end{equation}
where $g_1 = \min\{g_j\}$.
\end{theorem}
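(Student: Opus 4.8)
The plan is to reduce \eqref{eq:EqOptEnBound_Th} to a single closed-form lower bound on the energy-optimal objective $E_{g_1,\ldots,g_K}$, and then to establish that bound using Lemma~\ref{lem:tau1_UB} together with two elementary monotonicity facts.

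First I would write out both sides explicitly. Under the uniform schedule $\tau_i = \tau_s/K$,
\begin{align}
U_{g_1,\ldots,g_K} = \frac{\tau_s}{KL\mu}\left(2^{\frac{KL}{W\tau_s}}-1\right)\sum_{i=1}^K\frac{1}{g_i},
\end{align}
whereas for any feasible schedule $\{\tau_i\}$, writing $x_i := L/(W\tau_i)$ (so $\tau_i = L/(W x_i)$), the energy per bit equals $\sum_{i=1}^K \frac{1}{W\mu g_i}\cdot\frac{2^{x_i}-1}{x_i}$. Inserting the formula for $U_{g_1,\ldots,g_K}$ into \eqref{eq:EqOptEnBound_Th} and simplifying with the identity $g_1\sum_i 1/g_i = \sum_i g_1/g_i$, one checks that \eqref{eq:EqOptEnBound_Th} is equivalent to the single inequality
\begin{align}
E_{g_1,\ldots,g_K} \ \ge\ \frac{\tau_s}{L\mu g_1}\left(2^{\frac{L\sum_{i=1}^K g_1/g_i}{W\tau_s}}-1\right).
\label{eq:plan_EnTarget}
\end{align}

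To prove \eqref{eq:plan_EnTarget} I would first show that the energy-optimal schedule $\{\tau_i^*\}$ gives a device with a larger channel gain no more time than the weakest device, i.e. $\tau_i^* \le \tau_1^*$ for every $i$ (recall $g_1 = \min_j g_j$). For users whose minimum-time constraint is slack this is read off the stationarity condition of the convex TDMA energy program, which forces $\bigl(1 - 2^{x_i^*}(1 - x_i^*\ln 2)\bigr)/g_i$ to equal the same positive constant; since $x \mapsto 1 - 2^x(1 - x\ln 2)$ vanishes at $0$ and is strictly increasing on $[0,\infty)$, a larger $g_i$ forces a larger $x_i^*$ and hence a smaller $\tau_i^*$. (Users pinned at their minimum-time bound are handled separately, using that $\tau_{{\rm min}_i}$ is decreasing in $g_i$.) Combining $\tau_i^* \le \tau_1^*$ with Lemma~\ref{lem:tau1_UB}, namely $\tau_1^* \le \tau_s/\sum_{j=1}^K(g_1/g_j)$, gives $\tau_i^* \le \tau_s/\sum_{j=1}^K(g_1/g_j)$ for all $i$, equivalently $x_i^* \ge x_0 := \frac{L\sum_{j=1}^K g_1/g_j}{W\tau_s}$ for all $i$.

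Finally, since $x \mapsto (2^x-1)/x$ is increasing on $(0,\infty)$, substituting $x_i^* \ge x_0$ into $E_{g_1,\ldots,g_K} = \sum_i \frac{1}{W\mu g_i}\cdot\frac{2^{x_i^*}-1}{x_i^*}$ gives $E_{g_1,\ldots,g_K} \ge \frac{2^{x_0}-1}{x_0}\cdot\frac{1}{W\mu}\sum_{i=1}^K\frac{1}{g_i}$, and the identity $\frac{1}{x_0}\sum_i 1/g_i = \frac{W\tau_s}{Lg_1}$ turns the right-hand side into precisely \eqref{eq:plan_EnTarget}, which finishes the proof. The main work is the ordering step $\tau_i^* \le \tau_1^*$, in particular the bookkeeping for users pinned at their minimum-time bound; everything else is the two monotonicity facts and routine algebra, and in the regime of interest for M2M (where a small fraction of the weakest devices is dropped and the surviving $\tau_{{\rm min}_i}$ are small) even that bookkeeping disappears.
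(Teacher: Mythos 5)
Your proof is correct, and it assembles the same ingredients as the paper in a noticeably different way. The paper's Appendix~C proof first bounds the ratio of sums by the maximum of the per-user ratios $U_{g_k}/E_{g_k}$ (a mediant-type inequality), then uses the ordering $\tau_1^*\geq\tau_k^*$ to reduce everything to the weakest user, and finally applies Lemma~\ref{lem:tau1_UB} together with the monotonicity of $\Psi_1$ to bound $U_{g_1}/E_{g_1}$; you instead rewrite \eqref{eq:EqOptEnBound_Th} as an explicit closed-form lower bound on the optimal energy, push the ordering one step further to get the uniform bound $\tau_i^*\leq\tau_1^*\leq\tau_s/\sum_j(g_1/g_j)$ for \emph{every} user, and then lower-bound the optimal objective term by term using the monotonicity of $x\mapsto(2^x-1)/x$ (which, after the change of variables $x=L/(W\tau)$, is exactly the paper's statement that $\Psi_i$ is decreasing). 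The two routes rest on the same two facts --- the ordering of the optimal times by channel gain and Lemma~\ref{lem:tau1_UB} --- so neither is more powerful in substance, but yours buys a reusable intermediate statement (a closed-form lower bound on $E_{g_1,\ldots,g_K}$ itself, not just on the ratio) and you are more explicit about the KKT bookkeeping for users pinned at $\tau_{{\rm min}_i}$, a case the paper's stationarity argument silently ignores (and which indeed causes no trouble, since $\tau_{{\rm min}_i}$ is decreasing in $g_i$); the paper's max-of-ratios reduction is in turn slightly shorter because only the weakest user's ratio ever needs to be bounded.
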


\begin{cor}\label{cor:TDMAEn_limit}
In the limit of asymptotically low spectral efficiency, equal resource allocation is sum energy optimal on the space of TDMA strategies, i.e.,
\begin{align}
\lim_{W\tau_s \rightarrow \infty} \frac{U_{g_1, g_2 \ldots g_K}}{E_{g_1, g_2 \ldots g_K}} = 1.
\end{align}
\end{cor}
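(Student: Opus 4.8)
The plan is to squeeze the ratio between the trivial lower bound $1$ and the closed-form upper bound of Theorem~\ref{thm:TDMA_EnBound}, then show that the latter tends to $1$.

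First I would record the lower bound $U_{g_1, g_2 \ldots g_K}/E_{g_1, g_2 \ldots g_K} \ge 1$. In the regime considered (where $K \le K_{\rm max}$, so that the equal-time schedule $\tau_i = \tau_s/K$ lies in the feasible set of the energy-minimization program — which holds for all large $W\tau_s$ since each $\tau_{{\rm min}_i} = L/(W\log_2(1+P_{\rm max}\mu g_i)) \to 0$) the uniform allocation is one admissible choice, hence its energy $U_{g_1, g_2 \ldots g_K}$ is no smaller than the optimum $E_{g_1, g_2 \ldots g_K}$.

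Next I would analyze the right-hand side of~\eqref{eq:EqOptEnBound_Th}. Put $S := \sum_{i=1}^K g_1/g_i$ and $x := L/(W\tau_s)$, so that this bound equals $(S/K)\,(2^{Kx}-1)/(2^{Sx}-1)$. Since $g_1 = \min_j g_j$, every summand $g_1/g_i$ lies in $(0,1]$ and the $i=1$ term equals $1$, so $S$ is a constant with $1 \le S \le K$; in particular it is bounded away from $0$ and independent of $W\tau_s$. As $W\tau_s \to \infty$ we have $x \to 0$, hence $Kx \to 0$ and $Sx \to 0$; using $2^t - 1 = (\ln 2)\,t + O(t^2)$ as $t \to 0$ gives $(2^{Kx}-1)/(2^{Sx}-1) \to K/S$, so the whole bound tends to $(S/K)(K/S) = 1$.

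Combining these two bounds, the squeeze theorem yields $\lim_{W\tau_s\to\infty} U_{g_1, g_2 \ldots g_K}/E_{g_1, g_2 \ldots g_K} = 1$, as claimed. I do not anticipate a genuine obstacle: the only points requiring a line of justification are (i) the feasibility of the equal-time schedule, which underwrites the lower bound $\ge 1$, and (ii) the fact that $S$ does not vary with $W\tau_s$ and stays in $[1,K]$, so the denominator $2^{Sx}-1$ does not degenerate and the first-order expansion is legitimate. Equivalently, one may note that $t \mapsto (2^{Kt}-1)/(2^{St}-1)$ extends continuously to $t=0$ with value $K/S$, or apply L'H\^opital's rule to the indeterminate form.
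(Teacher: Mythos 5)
Your proposal is correct and follows essentially the same route as the paper: the paper also obtains the limit by applying the bound of Theorem~\ref{thm:TDMA_EnBound} as $L/(W\tau_s)\rightarrow 0$ via $\lim_{x\rightarrow 0}(a^x-1)=x\ln a$, with the lower bound $U_{g_1,\ldots g_K}/E_{g_1,\ldots g_K}\geq 1$ implicit from optimality of the energy-minimal schedule. Your added remarks on feasibility of the equal-time schedule and on $S=\sum_i g_1/g_i$ being fixed in $[1,K]$ simply make explicit what the paper leaves unstated.
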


The proof of this corollary directly follows from Theorem~\ref{thm:TDMA_EnBound} under the limit $\frac{L}{W\tau_s} \rightarrow 0$ by using $\lim_{x\rightarrow 0} a^x - 1 = x\ln(a)$. In addition to this asymptotic result, the bound is surprisingly tight even in the parameter space of interest for M2M. This is shown in the following example.

\begin{example}
\label{eq:Bound_TDMAEn}
For the same system parameters as that of Example~\ref{eg:Bound_TDMAPow}, the ratio of energy per bit in the uniform and optimal allocation strategies is bounded above by $1.25$. Clearly, the value of optimization in terms of energy per bit is also limited for TDMA.
\end{example}
\subsection{FDMA System Design} The FDMA system design proceeds exactly in the same way as discussed for the TDMA case above. We will therefore highlight only the main differences in problem formulation. The goal here is to partition the available bandwidth $W$ into $K$ parts $\{W_1, W_2, \ldots, W_K\}$ so as to minimize the total transmit power or energy. The total power required in this case can be expressed as:
\begin{align}
P =  \sum\limits_{i=1}^{K} \frac{W_i}{W}\frac{2^{\frac{L}{W_i \tau_s}}-1}{\mu g_i },
\label{eq:FDMA_sumP}
\end{align}
and the total energy per bit can be expressed as:
\begin{align}
E_b = \frac{\tau_s }{L}\sum\limits_{i=1}^{K} \frac{W_i}{W}\frac{\left( 2^{\frac{L}{W_i \tau_s}}-1\right) }{\mu g_i }.
\end{align}
Unlike TDMA case, both the power and energy optimal schedules are exactly the same in case of FDMA. This is simply because the energy per bit can be expressed as a constant multiple of power, where the constant is independent of the optimization parameters. The constraint on the maximum transmit power translates to the minimum bandwidth required by each device depending upon its channel condition. This minimum bandwidth $W_{\rm min}$ is the solution of the following equation:
\begin{align}
\frac{L} {\tau_s W_{{\rm min}_i}} = \log_2 \left( 1 + P_{\rm max} \mu \left( \frac{W}{W_{{\rm min}_i}} \right) g_i  \right).
\end{align}
The energy or power minimization problem can now be formulated as:
\begin{align}
\min_{\{W_i\}}\ \ \ & \sum\limits_{i=1}^{K} v(W_i) \nonumber \\
s.t.\ \ \ & \sum\limits_{i=1}^{K} W_i \leq W \nonumber \\
& W_i \geq W_{min_i} \nonumber \\
& 1 \leq i \leq K, \nonumber
\end{align}
where $v(W_i)$ is the cost function representing power or energy per bit. We note that the form of the optimization problem is exactly the same as that of the TDMA problem discussed above in detail and hence most of the insights about the exact and approximate solutions carry over.

\begin{remark}[Feasibility and Maximum Load] As discussed for the TDMA counterpart in Remark~\ref{rem:TDMAfeasibility}, the optimization problem is feasible if the constraint $\sum_{i=1}^K W_{{\rm min}_i} \leq W $ is satisfied. Using this constraint, an approximate bound on $K$ can be derived as:
\begin{align}
K \leq  \frac{W}{\E[W_{\rm min}]}.
\end{align}
The maximum load can also be defined in the same way as done for the TDMA case.
\end{remark}
This completes the analysis of the coordinated strategies and we now compare the maximum load that can be handled by a base station using TDMA and FDMA in the following example.
\begin{example}[Maximum Load: FDMA vs TDMA]
\label{eg:TDMA_FDMA}
Choosing the same set of general system parameters as that of Example~\ref{eg:CDMA} and $\delta$ vanishingly small, the maximum load a base station can handle using TDMA and FDMA is $\approx 1200$ and $\approx 14700$ respectively. This clearly shows that it is optimal to partition over frequency. Although TDMA and FDMA are exactly the same from information theoretic sense, the difference in the optimal solution arises as a result of the peak power constraint that affects the two schemes differently as is apparent from the expressions of $\tau_{\rm min}$ and $W_{\rm min}$. We will comment more on this in Section~\ref{sec:numresults}.
\end{example}

As done for the TDMA case, we now compare equal resource allocation with the optimal allocation in terms of transmit power and energy per bit.

\subsubsection{Equal Bandwidth Allocation vs Optimal Allocation} Before deriving the main result, as in the TDMA case we first derive an upper bound on the optimal bandwidth allocated to the packet corresponding to the minimum channel gain. The proof follows on the same lines as that of Lemma~\ref{lem:tau1_UB} and a sketch is given in Appendix~\ref{lem:W1_UBproof} for completeness.

\begin{lemma}
\label{lem:W1_UB}
The optimal bandwidth allocated to the packet corresponding to the minimum channel gain can be upper bounded by
\begin{align}
W_1^* \leq \frac{W}{\sum_{i=1}^K g_1/g_i},
\end{align}
where $g_1 = \min\{g_j\}$.
\end{lemma}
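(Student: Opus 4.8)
The plan is to run the argument behind Lemma~\ref{lem:tau1_UB} with bandwidth playing the role of transmission time; this is natural because, after writing the per-packet cost as $\psi(w) = w\bigl(2^{L/(w\tau_s)}-1\bigr)$, the FDMA objective \eqref{eq:FDMA_sumP} becomes $P = \frac{1}{\mu W}\sum_{i=1}^{K}\psi(W_i)/g_i$, which has exactly the separable form of the TDMA energy problem (and since the energy per bit is a fixed multiple of $P$ in FDMA, the minimiser is the same in both cases). Differentiating gives $\psi'(w) = f\!\bigl(\tfrac{L}{w\tau_s}\bigr)$ with $f(u) = 2^{u}(1-u\ln 2)-1$; from $f(0)=0$ and $f'(u) = -u(\ln 2)^{2}2^{u}<0$ we get $\psi'(w)<0$ for all $w>0$, so $\psi$ is decreasing (hence the budget $\sum_i W_i \le W$ is tight at the optimum) and, since $L/(w\tau_s)$ is decreasing in $w$, $\psi'$ is increasing and $\psi$ is convex.

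The first step is to record the exchange condition at the optimal allocation $\{W_i^*\}$. Because $g_1$ is the smallest gain, device~$1$ receives the largest bandwidth and is not pinned at its minimum $W_{{\rm min}_1}$ (the regime in which the lemma is used); transferring an infinitesimal amount of bandwidth from device~$1$ to device~$i$ cannot reduce $P$, which gives $\psi'(W_i^*)/g_i \ge \psi'(W_1^*)/g_1$ for every $i$, with equality whenever $W_i^* > W_{{\rm min}_i}$. Convexity of $\psi$ (together with the fact that $W_{{\rm min}_i}\le W_{{\rm min}_1}$) then yields $W_i^*\le W_1^*$ for all $i$. The key analytic ingredient --- the same monotonicity that is hidden inside the proof of Lemma~\ref{lem:tau1_UB} --- is that $w\mapsto w\,\psi'(w)$ is \emph{increasing} on $(0,\infty)$. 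Writing $w\,\psi'(w) = \tfrac{L}{\tau_s}\cdot\tfrac{f(u)}{u}$ with $u = L/(w\tau_s)$, this is equivalent to $f(u)/u$ being decreasing, and substituting $t = u\ln 2$ it reduces to $t^{2}-t+1 \ge e^{-t}$ for $t\ge 0$; this holds because $D(t) := t^{2}-t+1-e^{-t}$ satisfies $D(0)=D'(0)=0$ and $D''(t) = 2-e^{-t}>0$.

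Combining the pieces finishes the proof. From $W_i^*\le W_1^*$ and monotonicity of $w\,\psi'(w)$ we get $W_i^*\psi'(W_i^*)\le W_1^*\psi'(W_1^*)$; multiplying the exchange inequality by $W_i^*>0$ gives $W_i^*\psi'(W_i^*)\ge \tfrac{g_i}{g_1}W_i^*\psi'(W_1^*)$, hence $\tfrac{g_i}{g_1}W_i^*\psi'(W_1^*)\le W_1^*\psi'(W_1^*)$, and dividing by $\psi'(W_1^*)<0$ (which flips the inequality) yields $W_i^*\ge \tfrac{g_1}{g_i}W_1^*$ for every $i$. Summing over $i$ and using that the budget is tight, $W=\sum_{i=1}^{K}W_i^*\ge W_1^*\sum_{i=1}^{K} g_1/g_i$, which is the claimed bound. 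I expect the only genuine obstacle to be the monotonicity of $w\,\psi'(w)$ --- the one honest inequality in the argument, dispatched by the observation $D''(t)=2-e^{-t}>0$ above; the boundary situation in which device~$1$ is itself allocated only $W_{{\rm min}_1}$ is degenerate and falls outside the regime in which the lemma is invoked.
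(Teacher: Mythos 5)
Your proof is correct and takes essentially the same route as the paper's (which itself just ports the proof of Lemma~\ref{lem:tau1_UB} to bandwidth): an exchange/stationarity condition at the optimum combined with the monotonicity of $w \mapsto w\,\psi'(w)$ --- equivalently the paper's unproved claim that $\Phi(x)=2^{1/x}\ln 2 - x\,(2^{1/x}-1)$ is decreasing --- gives $W_i^* \ge (g_1/g_i)\,W_1^*$, and summing against the bandwidth budget yields the bound. The only differences are improvements in rigor: you actually verify the key monotonicity via $t^2-t+1\ge e^{-t}$, and you treat the minimum-bandwidth constraints explicitly (inequality exchange conditions, with the weakest device pinned at $W_{{\rm min}_1}$ flagged as the excluded degenerate case), whereas the paper silently assumes an interior optimum.
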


From this Lemma, the following bound on the ratios of the total transmit powers follows. The same bound holds for the total energy per bit as well. The proof follows on the same lines as that of Theorem~\ref{thm:TDMA_EnBound} and  is hence skipped.

\begin{theorem}
The ratio of total transmit powers (and energies) under uniform and optimal schedules can be bounded as:
\begin{equation}
\frac{U_{g_1, g_2 \ldots g_K}}{P_{g_1, g_2 \ldots g_K}} \leq \frac{\sum_{i=1}^K g_1/g_i}{K} \frac{2^\frac{KL}{W\tau_s}-1}{2^\frac{L\sum_{i=1}^K g_1/g_i}{W\tau_s}-1},
\label{eq:EqOptPowBound_Th2}
\end{equation}
where $g_1 = \min\{g_j\}$.
\end{theorem}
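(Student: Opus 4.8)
The plan is to mimic the proof of Theorem~\ref{thm:TDMA_EnBound}, with the bandwidth partition $\{W_i\}$ playing the role of the time partition $\{\tau_i\}$. First I would write the uniform cost in \eqref{eq:FDMA_sumP} as a sum of per-device terms: setting $W_i = W/K$ gives $U_{g_1,\dots,g_K} = \sum_{i=1}^{K} U_i$ with $U_i = \frac{1}{K}\,\frac{2^{KL/(W\tau_s)}-1}{\mu g_i}$, and likewise the optimal cost is $P_{g_1,\dots,g_K} = \sum_{i=1}^{K} P_i^{*}$ with $P_i^{*} = \frac{W_i^{*}}{W}\,\frac{2^{L/(W_i^{*}\tau_s)}-1}{\mu g_i}$, where $\{W_i^{*}\}$ denotes the optimal partition. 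All terms being positive, the mediant inequality gives $\frac{U_{g_1,\dots,g_K}}{P_{g_1,\dots,g_K}} \le \max_{i} \frac{U_i}{P_i^{*}}$. The crucial simplification is that the factor $1/(\mu g_i)$ is common to $U_i$ and $P_i^{*}$ and therefore cancels, leaving
\[
\frac{U_i}{P_i^{*}} \;=\; \frac{W\,(2^{KL/(W\tau_s)}-1)}{K\,\chi(W_i^{*})}\,,\qquad \chi(W) \;:=\; W\,(2^{L/(W\tau_s)}-1)\,.
\]

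Next I would control the right-hand side through $\chi$. A short computation (substitute $t=1/W$ and note $(2^{at}-1)/t$ is increasing in $t$ for $a>0$, being the secant slope of a strictly convex function) shows $\chi$ is strictly decreasing on $(0,\infty)$, so $\min_i \chi(W_i^{*}) = \chi(\max_i W_i^{*})$ and hence $\max_i \frac{U_i}{P_i^{*}} = \frac{W\,(2^{KL/(W\tau_s)}-1)}{K\,\chi(\max_i W_i^{*})}$. I then claim the weakest device carries the largest allocation, i.e. (relabelling among $g$-ties if necessary) $\max_i W_i^{*} = W_1^{*}$: if $W_j^{*} > W_1^{*}$ for some $j$ with $g_j>g_1$, swapping the bandwidths of devices $1$ and $j$ changes the objective by $\frac{1}{W\mu}\bigl(\chi(W_j^{*})-\chi(W_1^{*})\bigr)\bigl(\frac{1}{g_1}-\frac{1}{g_j}\bigr) < 0$ (the first factor is negative since $\chi$ decreases and $W_j^{*}>W_1^{*}$, the second is positive since $g_1=\min_j g_j$), while the swap stays feasible because the minimum-bandwidth thresholds $W_{{\rm min}_i}$ are themselves decreasing in $g_i$, so $W_1^{*}\ge W_{{\rm min}_1}\ge W_{{\rm min}_j}$ and $W_j^{*}\ge W_1^{*}\ge W_{{\rm min}_1}$; this contradicts optimality. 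Finally Lemma~\ref{lem:W1_UB} gives $W_1^{*} \le W/S$ with $S := \sum_{i=1}^{K} g_1/g_i$, and monotonicity of $\chi$ yields $\chi(\max_i W_i^{*}) = \chi(W_1^{*}) \ge \chi(W/S) = \frac{W}{S}\,(2^{LS/(W\tau_s)}-1)$; substituting this into the previous display produces exactly \eqref{eq:EqOptPowBound_Th2}.

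The energy statement then comes for free: for FDMA one has $E_b = \frac{\tau_s}{L}\,P$ term by term for \emph{every} allocation (this is precisely why, unlike in TDMA, the power- and energy-optimal FDMA schedules coincide), so multiplying every $U_i$ and every $P_i^{*}$ by the common constant $\tau_s/L$ leaves all the ratios $U_i/P_i^{*}$ unchanged and the same bound holds. The step I expect to take the most care is the rearrangement claim $\max_i W_i^{*} = W_1^{*}$ together with the feasibility check against the $W_{{\rm min}_i}$ constraints — this is the exact analogue of the role Lemma~\ref{lem:tau1_UB} plays in Theorem~\ref{thm:TDMA_EnBound}; the monotonicity of $\chi$, the mediant inequality, and the closing algebra are routine.
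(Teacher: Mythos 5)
Your proposal is correct and takes essentially the same route as the paper, which skips this proof as the direct analogue of Theorem~\ref{thm:TDMA_EnBound}: per-user decomposition, the mediant bound $\frac{\sum_i U_i}{\sum_i P_i^*}\le\max_i U_i/P_i^*$, reduction to the weakest user, Lemma~\ref{lem:W1_UB}, and monotonicity of $x\mapsto x\bigl(2^{c/x}-1\bigr)$, with the energy case following from $E_b=\frac{\tau_s}{L}P$ termwise. The only (minor, harmless) difference is that you establish $W_1^*=\max_i W_i^*$ via a swap argument with an explicit feasibility check against the $W_{{\rm min}_i}$ constraints, whereas the paper reads this ordering off the first-order stationarity condition as in Lemma~\ref{lem:tau1_UB}.
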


\begin{remark}[FDMA Bound and Optimality] Note that this bound is the same as the one derived for the energy optimal solution of TDMA. Therefore, Example~\ref{eq:Bound_TDMAEn} is applicable in this case and hence the ratio of energy or power under uniform and optimal allocation strategies is upper bounded by $1.25$. Moreover, equal resource allocation is both sum power and sum energy optimal over the space of FDMA strategies. This follows from Corollary~\ref{cor:TDMAEn_limit}. 
\end{remark}

\subsubsection{Equal Bandwidth Allocation vs SIC} We conclude this section with an even stronger result, which proves that FDMA is sum power optimal over the space of general resource allocation strategies -- not limited to orthogonal -- in the limit of low spectral efficiency. The result follows by comparing the transmit power under equal bandwidth allocation with the global optimal transmit power achieved by SIC and is given by the following theorem. We denote the transmit power under equal bandwidth allocation by $U_{g_1, g_2 \ldots g_K}$ and by slight abuse of notation under SIC by $P_{g_1, g_2 \ldots g_K}$.

\begin{theorem}
Equal bandwidth allocation is sum power optimal over a space of general resource allocation strategies in the limit of low spectral efficiency, i.e., 
\begin{align}
\lim_{W\tau_s \rightarrow \infty}\frac{U_{g_1, g_2 \ldots g_K}}{P_{g_1, g_2 \ldots g_K}} = 1.
\end{align}
\end{theorem}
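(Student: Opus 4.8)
The plan is to combine the closed-form SIC power of \eqref{eq:SIC_sumP} with the equal-bandwidth specialization of \eqref{eq:FDMA_sumP} and Taylor-expand both as the per-symbol spectral efficiency $L/(W\tau_s)$ vanishes. First I would write down the equal-bandwidth power by setting $W_i = W/K$ in \eqref{eq:FDMA_sumP}:
\[
U_{g_1, g_2 \ldots g_K} = \frac{2^{\frac{KL}{W\tau_s}}-1}{K\mu}\sum_{i=1}^K \frac{1}{g_i},
\]
while \eqref{eq:SIC_sumP} gives
\[
P_{g_1, g_2 \ldots g_K} = \frac{2^{\frac{L}{W\tau_s}}-1}{\mu}\sum_{k=1}^K \frac{2^{\frac{(k-1)L}{W\tau_s}}}{g_k}.
\]
Since SIC is globally power-optimal over \emph{all} (not merely orthogonal) allocations, $P_{g_1, g_2 \ldots g_K}\le U_{g_1, g_2 \ldots g_K}$ for every $W\tau_s$, so $U/P\ge 1$ always holds; it therefore suffices to show $U/P$ is bounded above by a quantity tending to $1$.

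Next I would set $x \nbydef L/(W\tau_s)$, which tends to $0$ as $W\tau_s\to\infty$ (with $K$ and $g_1\le\cdots\le g_K$ held fixed), and factor
\[
\frac{U_{g_1, g_2 \ldots g_K}}{P_{g_1, g_2 \ldots g_K}} = \underbrace{\frac{(2^{Kx}-1)/K}{2^{x}-1}}_{(\mathrm{I})}\cdot\underbrace{\frac{\sum_{i=1}^K 1/g_i}{\sum_{k=1}^K 2^{(k-1)x}/g_k}}_{(\mathrm{II})}.
\]
For (II), each summand $2^{(k-1)x}/g_k\to 1/g_k$ and the sums are finite, so $(\mathrm{II})\to 1$. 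For (I), apply $\lim_{x\to 0}(a^{x}-1)/x = \ln a$ (the same fact used for Corollary~\ref{cor:TDMAEn_limit}) to both numerator and denominator: $(2^{Kx}-1)/K \sim x\ln 2$ and $2^{x}-1\sim x\ln 2$, hence $(\mathrm{I})\to 1$. Therefore $U/P\to 1$, and together with $U/P\ge 1$ this proves the claim.

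The main obstacle is essentially bookkeeping rather than anything deep: the substantive content lives upstream, in the closed-form SIC expression \eqref{eq:SIC_sumP} that certifies global optimality and in the observation that equal bandwidth collapses \eqref{eq:FDMA_sumP} to a single exponential. The one point deserving a sentence of care is that the limit is taken with the number of served users $K$ and the gains fixed, so that the termwise limits in (I) and (II) may be taken directly with no uniformity issue; the number of summands does not grow with $W\tau_s$. It would also be natural to note in passing that the leading correction is $\tfrac{1}{2}(K-1)(\ln 2)\,L/(W\tau_s)+o(L/(W\tau_s))$, i.e.\ the convergence is of order $L/(W\tau_s)$, consistent with the non-asymptotic FDMA-versus-optimal bound \eqref{eq:EqOptPowBound_Th2} of which this statement is the $L/(W\tau_s)\to 0$ edge case.
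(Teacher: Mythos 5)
Your proposal is correct and takes essentially the same route as the paper: specialize \eqref{eq:FDMA_sumP} to $W_i=W/K$, divide by the SIC power \eqref{eq:SIC_sumP}, and send $L/(W\tau_s)\to 0$ using $\lim_{x\rightarrow 0}(a^x-1)/x=\ln a$ with $K$ and the gains held fixed (your extra observations, $U/P\ge 1$ from SIC optimality and the fixed-$K$ caveat, are fine but not needed). The only inaccuracy is your parenthetical leading-order correction, which is not part of the argument: the first-order term is $\left(\frac{K-1}{2}-\frac{\sum_{k}(k-1)/g_k}{\sum_{k}1/g_k}\right)\ln 2\,\frac{L}{W\tau_s}$, since the gain-weighted sum in the SIC denominator also contributes at order $L/(W\tau_s)$; this does not affect the validity of the proof.
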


\begin{proof}
From the expressions of the sum power under SIC given by~\eqref{eq:SIC_sumP} and under FDMA given by~\eqref{eq:FDMA_sumP}, the ratio can be expressed as
\begin{align}
\frac{U_{g_1, g_2 \ldots g_K}}{P_{g_1, g_2 \ldots g_K}} = \frac{1}{K} \frac{2^\frac{LK}{W\tau_s} - 1}{2^\frac{L}{W\tau_s} - 1} \frac{\sum_{k=1}^K \frac{1}{g_k}}{\sum_{k=1}^K \frac{1}{g_k} 2^\frac{(k-1)L}{W\tau_s}},
\end{align}
from which the result follows by using $\lim_{x\rightarrow 0} a^x - 1 = x\ln(a)$.
\end{proof}
\section{Discussion and Numerical Results} \label{sec:numresults}

Note that the preferred choices for both the uncoordinated and coordinated strategies are clear from examples~\ref{eg:FDMA_RandAccess} and \ref{eg:TDMA_FDMA}. In case of uncoordinated strategies, example~\ref{eg:FDMA_RandAccess} shows that the random access CDMA supports order of magnitude higher load than random access FDMA and in case of coordinated strategies, example~\ref{eg:TDMA_FDMA} shows that coordinated FDMA is clearly a better choice over coordinated TDMA. Therefore, in this section we will not consider random access FDMA and coordinated TDMA, except when we compare energy optimal solutions. Interested readers can refer to~\cite{DhiHuaC2012} for a detailed discussion on the optimal power TDMA results. Instead, our main focus in this section will be on random access CDMA, coordinated FDMA and SIC in various regimes of interest.


\subsection{Optimal Transmit Power}
We first compare the optimal transmit power required for random access CDMA, coordinated FDMA and SIC in Fig.~\ref{fig:Pow_MainComparison}, where we plot both the mean power and the $95^{th}$ percentile of the power. In both the cases, we observe that the performance of random access CDMA is close to that of FDMA in low and moderate arrival rates, especially if one accounts for the signaling overhead required in FDMA case. This shows that CDMA random access may be preferred at low to moderate arrival rates due to the apparent simplicity of the resulting system design. As expected, FDMA is the only option at high arrival rates and its performance is very close to that of the optimal SIC performance. One possible system design for this regime is to first use random access CDMA to establish uplink connection by sending small payload containing only the control information. The load supported by random access CDMA in this case will be very high because of the small payload size. The base station can then decide about the frequency allocation and relay this information back to the devices through broadcast signals. Recall that these two system designs, corresponding to high and low loading, are exactly the same as the one stage and two stage design examples discussed in Section~\ref{sec:sysmod}.

\subsection{Optimal Energy per Bit}
We now compare the three strategies along with coordinated TDMA in terms of the energy per bit in Fig.~\ref{fig:AvgEb_MainComparison}. We first note that the performance of TDMA and FDMA cases is similar. This is intuitive because both TDMA and FDMA are exactly the same from information theoretic sense. The difference in the transmit powers required in both the cases is a result of the peak power constraint that affects the two schemes differently. Now comparing the three main candidate strategies, we again note that random access CDMA performance is close to that of coordinated FDMA at low to moderate loading, and coordinated FDMA is the only choice at high loading. The SIC performance is close to that of FDMA in this case as well. These observations lead to the same design guidelines as discussed above in case of optimal transmit power.

\subsection{Coordinated FDMA vs SIC at very High Loading}
In Fig.~\ref{fig:Pow_FDMA_1_13001}, we consider very high loading regime and restrict our comparison to SIC and coordinated FDMA for which we consider both optimal and equal bandwidth allocation strategies. We evaluate both the mean optimal power and $95^{\rm th}$ percentile of the power for both the strategies. Comparing equal bandwidth allocation with optimal FDMA, we note that equal resource allocation is near-optimal even at very high loading, which is consistent with our analysis. On the other hand, we observe comparatively higher gap between the two strategies in terms of $95^{\rm th}$ percentile of the power. The gap is, however, not too significant since even at $\lambda = 6000$ the transmit powers differ by less than $3$ dB. 
On the other hand, the optimal SIC performance is comparatively much better, with the performance gap being more than $6-7$ dB from the equal bandwidth allocation and around $4$ dB from optimal bandwidth allocation at $\lambda = 6000$. However, it is important to note that optimal SIC requires perfect channel estimates for all the devices, which may not be realistic especially at high arrival rates. Therefore, the performance gap, even at very high loading, is unlikely to be significant if we account for high implementation losses expected in SIC. Exact quantification of these losses is out of the scope of this paper.

\begin{figure}[t]
\centering
\includegraphics[width=\columnwidth]{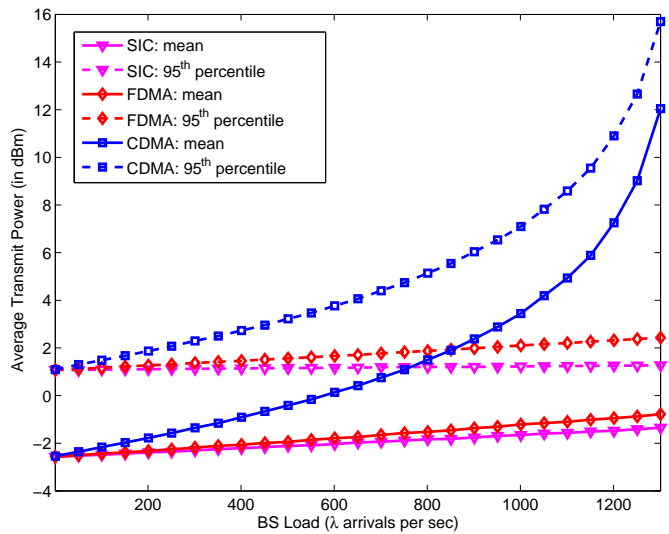}
\caption{Comparison of power optimal solutions of SIC and FDMA with random access CDMA.}
\label{fig:Pow_MainComparison}
\end{figure}

\begin{figure}[t]
\centering
\includegraphics[width = \columnwidth]{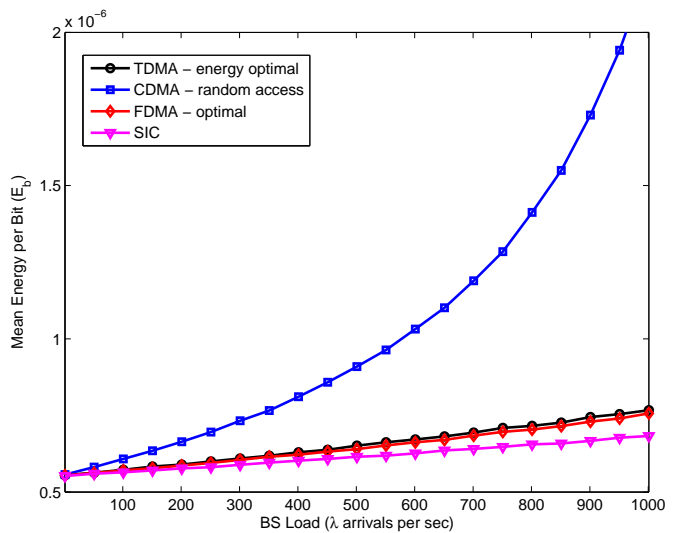}
\caption{Comparison of energy optimal solution of TDMA and FDMA with random access CDMA.}
\label{fig:AvgEb_MainComparison}
\end{figure}

\begin{figure}[t]
\centering
\includegraphics[width=\columnwidth]{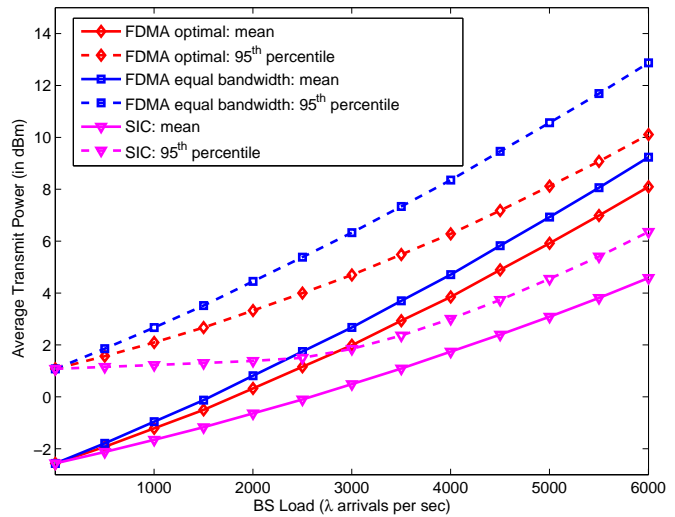}
\caption{Comparison of the optimal power  and equal bandwidth allocation solutions for FDMA. 
}
\label{fig:Pow_FDMA_1_13001}
\end{figure}

\section{Conclusions}
In this paper, we have developed a systematic framework to study the power and energy optimal system design in the parameter space of interest for M2M communications. For comparison, we consider a variety of uncoordinated strategies, such as random access CDMA and FDMA, and coordinated strategies, such as SIC, FDMA and TDMA. While the coordinated FDMA is the best practical strategy overall, random access CDMA is almost as good when the base station is lightly loaded and is a strong candidate for the actual system design due to its apparent simplicity and no signaling over head. Additionally, we have shown that the value of optimization is in general small both for TDMA and FDMA, and simpler resource allocation strategies, such as equal resource allocation achieve performance within a small constant of the optimal performance. In the limit of asymptotically small spectral efficiency, equal bandwidth allocation in coordinated FDMA is shown to be power optimal over the space of general resource allocation strategies. An important extension of this work includes accounting for the implementation losses and signaling overhead while comparing various multiple access strategies.

\appendices
\section{Proof of Theorem~\ref{th:eqTDMA}} \label{th:eqTDMA_proof}
Let $U_{g_1, g_2 \ldots g_K}$ denote the total transmit power under uniform schedule (equal transmission time) and $P_{g_1, g_2 \ldots g_K}$ denote the power under the optimal schedule. Further let $U_{g_i, \ldots g_j}$ and $P_{g_i, \ldots g_j}$ be the transmit powers of the subset of users under uniform and optimal schedules, respectively. Without loss of generality, we assume that the channel gains $g_i$ are indexed in the increasing orders of magnitude. Let the total transmission time of $K$ devices be $\tau$, with $\tau_j^*$ representing the optimal transmission time of $j^{th}$ packet. The ratio of powers under two schemes can now be expressed as:
\begin{align}
\frac{U_{g_1, g_2 \ldots g_K}}{P_{g_1, g_2 \ldots g_K}} &= \frac{U_{g_1} + U_{g_2} + \ldots U_{g_K} }{P_{g_1} + P_{g_2} + \ldots P_{g_K} }\\
& \leq \max \left\{ \frac{U_{g_1}}{P_{g_1}}, \frac{U_{g_2}}{P_{g_2}}, \ldots, \frac{U_{g_K}}{P_{g_K}}   \right\}.
\label{eq:Max_Kterms}
\end{align}
The ratio $U_{g_k}/P_{g_k}$ can be expressed as:
\begin{equation}
\frac{U_{g_k}}{P_{g_k}} = \frac{2^\frac{KL}{W\tau_s}-1}{2^\frac{L}{W\tau_k^*}-1}.
\end{equation}
From the following optimality condition derived in \eqref{eq:optcond}:
\begin{equation}
\frac{2^{L/W\tau_k^*} } {g_k \tau_k^{*^2}} = \frac{2^{L/W\tau_1^*} } {g_1 \tau_1^{*^2}},
\label{eq:OptCond}
\end{equation}
we note that $\tau_1^* \geq \tau_k^*$ $\forall\ k$ since $g_1 \leq g_k$. Therefore, \eqref{eq:Max_Kterms} can be expressed as:
\begin{equation}
\frac{U_{g_1, g_2 \ldots g_K}}{P_{g_1, g_2 \ldots g_K}} \leq \frac{U_{g_1}}{P_{g_1}}.
\end{equation}
To bound $\frac{U_{g_1}}{P_{g_1}}$, we first derive the following inequality from the optimality condition:
\begin{equation}
\tau_k^* \geq \tau_1^* \sqrt{\frac{g_1}{g_k}}, \forall\ k.
\label{eq:tauk_LB}
\end{equation}
Using \eqref{eq:tauk_LB}, we now derive an upper bound on $\tau_1^*$ as follows:
\begin{align}
\tau_s = \sum\limits_{j=1}^K \tau_j^* \geq \tau_1^* \sum\limits_{j=1}^K \sqrt{\frac{g_1}{g_j}} \Rightarrow \tau_1^* \leq \tau_s/ \sum\limits_{j=1}^K \sqrt{\frac{g_1}{g_j}}.
\label{eq:tau1_UB}
\end{align}
Using \eqref{eq:tau1_UB}, the ratio $U_{g_1}/P_{g_1}$ can be bounded as:
\begin{equation}
\frac{U_{g_1}}{P_{g_1}} = \frac{2^{KL/W\tau_s}-1}{2^{L/W\tau_1^*}-1} \leq \frac{2^{\frac{KL}{W\tau_s}}-1}{2^{\frac{L\sum\limits_{j=1}^K \sqrt{\frac{g_1}{g_j}}}{W\tau_s}}-1},
\end{equation}
which completes the proof. \hfill \IEEEQEDclosed

\section{Proof of Lemma~\ref{lem:tau1_UB}} \label{lem:tau1_UBproof}
Assume that the channel gains $g_i$ are indexed in the increasing orders of magnitude. For notational simplicity, denote the transmit energy per bit of the $i^{th}$ device transmitting for time $0<x\leq\tau_s$ by $\Psi_i(x)$:
\begin{align}
\Psi_i(x) = \frac{x}{L} \frac{2^{\frac{L}{x W}}-1}{\mu g_i}.
\end{align}
Note that $\Psi_i(x)$ is a monotonically decreasing function in $x$. Therefore,
\begin{align}
\Psi_i'(x) &= \frac{\delta}{\delta x}\Psi_i (x)\\
 &= \frac{1}{L \mu g_i}\left(2^{\frac{L}{x W}} - 1 - 2^{\frac{L}{x W}} \frac{L}{x W} \ln 2 \right) \leq 0,\ x > 0.
\label{eq:neg_slope1}
\end{align}
Now we derive the optimality condition for the energy optimal TDMA allocation. The total transmit energy per bit can be expressed as:
\begin{align}
E &=  \sum\limits_{i=1}^{K} \Psi_i(\tau_i) = \Psi_1(\tau_1) + \Psi_2(\tau_2) + \ldots
\Psi_K\left( \tau_s - \sum_{i=1}^{K-1} \tau_i \right).
\end{align}
Minimizing the transmit energy $E$ w.r.t. $\tau_1$ we get
\begin{align}
\label{eq:optcond1}
\frac{\delta E }{\delta \tau_1} = 0 \Rightarrow \Psi_1'(\tau_1) = \Psi_j'(\tau_j)\ \forall\ j,
\end{align}
which has to be satisfied for the optimal transmission time $\tau_i^*$ as well. Therefore,
\begin{align}
\frac{g_1}{g_i} &= \frac{ 2^{\frac{L}{\tau_1^* W}} \frac{L}{\tau_1^* W} \ln 2  + 1 - 2^{\frac{L}{\tau_1^* W}}}{2^{\frac{L}{\tau_i^* W}} \frac{L}{\tau_i^* W} \ln 2  + 1 - 2^{\frac{L}{\tau_i^* W}} }\\
&= \frac{\frac{L}{\tau_1^* W}}{\frac{L}{\tau_i^* W}}  \left(\frac{2^{\frac{L}{\tau_1^* W}} \ln 2 - \frac{\tau_1^* W}{L} (2^{\frac{L}{\tau_1^* W}}-1)}{2^{\frac{L}{\tau_i^* W}} \ln 2 - \frac{\tau_i^* W}{L} (2^{\frac{L}{\tau_i^* W}}-1)} \right) \\
&\stackrel{(a)}{\leq} \frac{\tau_i^*}{\tau_1^*},
\end{align}
where $(a)$ follows from (i) the function $\Phi(x) = 2^\frac{1}{x}\ln 2 - x (2^\frac{1}{x} - 1)$ is a decreasing function of $x$ and hence the ratio $\Phi(x_1)/\Phi(x_2) \leq 1$ for $x_1 \geq x_2$, and (ii) $\tau_1^* \geq \tau_i^*$, $\forall\ i$ which follows from the optimality condition along with the fact that $\Psi_i'(x)$ is a monotonic function of $x$.
Note that
\begin{align}
\tau_s = \sum_i \tau_i^* \geq \tau_1^*  \sum_{i=1}^K \frac{g_1}{g_i} \Rightarrow \tau_1^* \leq \frac{\tau_s}{\sum_{i=1}^K g_1/g_i},
\end{align}
which completes the proof.\hfill \IEEEQEDclosed

\section{Proof of Theorem~\ref{thm:TDMA_EnBound}} \label{thm:TDMA_EnBoundproof}
Let $U_{g_1, g_2 \ldots g_K}$ denote the total energy per bit under uniform schedule (equal transmission time) and $E_{g_1, g_2 \ldots g_K}$ denote the energy per bit under the optimal schedule. Further let $U_{g_i, \ldots g_j}$ and $P_{g_i, \ldots g_j}$ be the transmit powers of the subset of users under uniform and optimal schedules, respectively. Rest of the setup remains the same as that of the proof of Theorem~\ref{th:eqTDMA}. 
The ratio of energies under two scheduling schemes can now be expressed as:
\begin{align}
\frac{U_{g_1, g_2 \ldots g_K}}{E_{g_1, g_2 \ldots g_K}} &= \frac{U_{g_1} + U_{g_2} + \ldots U_{g_K} }{E_{g_1} + E_{g_2} + \ldots E_{g_K} }\\
& \leq \max \left\{ \frac{U_{g_1}}{E_{g_1}}, \frac{U_{g_2}}{E_{g_2}}, \ldots, \frac{U_{g_K}}{E_{g_K}}   \right\}.
\label{eq:Max_Kterms1}
\end{align}
The ratio $U_{g_k}/E_{g_k}$ can be expressed as:
\begin{equation}
\frac{U_{g_k}}{E_{g_k}} = \frac{\tau_s}{K\tau_k^*} \frac{2^\frac{KL}{W\tau_s}-1}{2^\frac{L}{\tau_k^*W}-1}.
\end{equation}
From the optimality condition derived in Lemma~\ref{lem:tau1_UB}, $\tau_1^* \geq \tau_i^*$ $\forall\ i$ since $g_1 \leq g_k$. Therefore, \eqref{eq:Max_Kterms1} can be expressed as:
\begin{equation}
\frac{U_{g_1, g_2 \ldots g_K}}{E_{g_1, g_2 \ldots g_K}} \leq \frac{U_{g_1}}{E_{g_1}}.
\end{equation}
As stated in the proof of Lemma~\ref{lem:tau1_UB}, $E_{g_1}$ is a monotonically decreasing function of the transmission time. Hence the upper bound on the ratio follows from the result of Lemma~\ref{lem:tau1_UB}. \hfill \IEEEQEDclosed

\section{Proof of Lemma~\ref{lem:W1_UB}} \label{lem:W1_UBproof}
Denote the transmit powers of the subset of the users under equal bandwidth allocation and optimal allocation by $U_{g_i, \ldots g_j}$ and $P_{g_i, \ldots g_j}$, respectively. Without loss of generality, assume that the channel gains $g_i$ are indexed in the increasing orders of magnitude. Let the total bandwidth to be allocated to $K$ packets is $W$, with $W_j^*$ representing the optimal bandwidth of $j^{th}$ packet. For notational simplicity, denote the transmit power of the $i^{th}$ user using bandwidth $0<x<W$ by $\Psi_i(x)$:
\begin{align}
\Psi_i(x) = \left( \frac{x}{W} \right) \frac{2^{\frac{L}{x \tau}}-1}{\mu g_i}.
\end{align}
Note that the functional form of $\Psi_i$ is the same as that in Lemma~\ref{lem:tau1_UB}. Therefore, the proof essentially follows on the same lines. Expressing the total transmit power as:
\begin{align}
P &=  \sum\limits_{i=1}^{K} \Psi_i(W_i)\\
 &= \Psi_1(W_1) + \Psi_2(W_2) + \ldots
\Psi_K\left( W - \sum_{i=1}^{K-1} W_i \right),
\end{align}
and following the same methodology as of Lemma~\ref{lem:tau1_UB}, we can derive the following lower bound on the ratio of channel gains:
\begin{align}
\frac{g_1}{g_i} \leq \frac{W_i^*}{W_1^*}.
\end{align}
Now note that
\begin{align}
W = \sum_i W_i^* \geq W_1^*  \sum_{i=1}^K \frac{g_1}{g_i} \Rightarrow W_1^* \leq \frac{W}{\sum_{i=1}^K g_1/g_i},
\end{align} 
which completes the proof. \hfill \IEEEQEDclosed
\bibliographystyle{IEEEtran}
\bibliography{M2M_Journal1_v1.7}

\end{document}